\newcommand{\In}{{\mathrm {In}}}
\newcommand{\cT}{{ (c,T)  }}
\newcommand{\oY}{\overline{Y}}
\newcommand{\IR}{\mathds{R}}
\newcommand{\IN}{\mathds{N}}
\renewcommand{\leq}{\leqslant}
\renewcommand{\geq}{\geqslant}
\newcommand{\ignore}[1]{}
\newtheorem{thm}{Theorem}
\newtheorem{prop}[thm]{Proposition}
\newtheorem{lem}[thm]{Lemma}
\newtheorem{cor}[thm]{Corollary}
\newtheorem{defi}[thm]{Definition}
\begin{document}

\title{Synchronization in Dynamic Networks}

 \author{%
 Bernadette Charron-Bost\textsuperscript{1} 
  \and Shlomo Moran\textsuperscript{2}}
 \date{\textsuperscript{1} \'Ecole polytechnique, 91128 Palaiseau, France\\
 \textsuperscript{2} Department of Computer Science, Technion, Haifa, Israel 32000\\~\\
 \today
  }




\maketitle

\begin{abstract}%
	In this article, we study algorithms for dynamic networks with asynchronous start, i.e.,  each node may start 
		running the algorithm in a different round. Inactive nodes transmit only heartbeats, which contain no information 
		but  can be detected by active nodes. 
	We make no assumption on the way the nodes are awakened, 
		except that for each node $u$ there is a time~$s_u$ in which it is awakened and starts to run the algorithm.
	The identities of the nodes are not mutually known, and the network size is unknown as well.

	We present synchronization algorithms, which guarantee that after a finite number of rounds, 
		all nodes hold the same round number, which is incremented by one each round thereafter. 
	We study the time complexity and message size required for synchronization, and specifically 
		for {\em simultaneous} synchronization, in which all nodes synchronize their round numbers 
		at exactly the same round. 

	We show that there is a strong relation between the complexity of simultaneous synchronization 
		and the connectivity of the dynamic graphs: With high connectivity which guarantees that messages 
		can be broadcasted in a constant number of rounds, simultaneous synchronization by all nodes can 
		be obtained by a deterministic algorithm  within a constant number of rounds,  and with messages 
		of constant size.
	With a weaker connectivity, which only guarantees that the broadcast time is proportional to the network size,
		our algorithms still achieve simultaneous synchronization, but within linear time and long messages.

	We also discuss how information on the network size and randomization may improve synchronization algorithms,  
		and show related impossibility results.
\end{abstract}

\ignore{macros
\subjclass{F.2.2 Nonnumerical Algorithms and Problems, G.2.2 Graph Theory}
\keywords{Synchronization; Detection; Simultaneity; Dynamic Networks}

\EventEditors{John Q. Open and Joan R. Acces}
\EventNoEds{2}
\EventLongTitle{42nd Conference on Very Important Topics (CVIT 2016)}
\EventShortTitle{CVIT 2016}
\EventAcronym{CVIT}
\EventYear{2017}
\EventDate{December 24--27, 2016}
\EventLocation{Little Whinging, United Kingdom}
\EventLogo{}
\SeriesVolume{42}
\ArticleNo{23}
endignores macros}

\section{Introduction}\label{sec:intro}
We study distributed algorithms for {\em dynamic networks} over an arbitrary finite set of nodes~$V$
	that operate in synchronized rounds, communicate by broadcast messages, and 
	in which the inter-node connectivity may change each round of communication.
The node identities, and even the cardinality of the set $V$,  are not mutually known.

In previous works it was typically assumed that algorithms in such dynamic networks are started simultaneously by all nodes, and consequently that all nodes share the true round number (which is incremented by one each round). In this paper we relax this assumption, and consider a model in which round numbers are unknown to the nodes, and further that each node may start running the algorithm in a different round. 
This relaxation is natural in environments with no central control which monitors the nodes activities. 
To make our results more general, we do not make any assumption on the way a node may become active 
	and start running the algorithm, except that eventually all nodes are active. 
	
In this model, we study the basic question of  synchronizing the network, in the sense that 
	we wish to ensure that eventually all nodes share the same round number.
More specifically, we will focus on the following two levels of synchronization:
	(a)   implementing local round counters that are eventually all equal, and
	(b)  synchronizing the nodes themselves -- and not only their round counters -- i.e.,  detecting
	the synchronization of the local round counters	simultaneously.
	
Simultaneous synchronization  (b)  can be useful in various situations such as
	real-time processing (where processors have to carry out some external actions simultaneously),
	distributed initiation (to force nodes to begin some computation in unison), or   distributed termination
	(to guarantee that nodes complete their computation at the same round).
It actually coincides with the {\em Firing Squad problem}~\cite{BL87, CDDS85, CD91}:
	a node {\em fires} when it detects synchronization of the round counters. In the context of clock synchronization, our results  imply conditions under which a simultaneous {\em phase synchronization} can be achieved in dynamic networks, given that the local clocks of the nodes have the same {\em frequency}  (see eg \cite{SS07}).

We investigate these two levels of synchronization in the context of dynamic networks: the communication topology
	may continuously and unpredictably change from one round to the next.
In particular, we do not assume any stability of the links.
We examine various connectivity  properties that hold, not necessary round by round, but  globally over finite periods of consecutive rounds~\cite{KKK02}.

Our synchronisation algorithms demonstrate  a strong relation between the possibility and cost of synchronizing 
	a network, and the time required to broadcast a message from each node in the network: 
	perhaps a bit surprisingly, when broadcasts from all nodes are possible within 
	a constant number of rounds, a simultaneous synchronization can be achieved by a simple algorithm 
	within a constant number of rounds, using messages of constant size. 
When broadcast time is linear in the network size~$n$, we still achieve simultaneous synchronization in time 
	which is proportional to broadcast time (i.e., linear time), but with messages of size $\Omega(n\log(n))$. 
When broadcast from each node is possible but there is no bound on the number of required rounds, simultaneous 
	synchronization is not possible, but the simple synchronization of round counters is still achievable within finite time. 
	
We then study models in which some bound~$N$ on the network size is  known.
We present there few impossibility results and algorithms, including a randomized algorithm that assumes an oblivious adversary and performs simultaneous synchronization in linear time with high probability, but with messages which are considerably shorter than the ones used by our deterministic algorithm for the same task.

\noindent{\bf{Related work.}} Synchronization problems in distributed systems have been extensively studied,
	but most of works assumed a fixed topology~\cite{Tel00} or a complete graph and at most $f$ faulty nodes~\cite{Lyn96},
	i.e.,  a {\em fixed\/} core of at least $n-f$ nodes.
	
Our work is closely related to
 the article by Kuhn et al.~\cite{KLO10} on distributed computation in dynamic networks:
	some of our results are based on the  generalization to asynchronous starts of their approach for counting the size of the network.
In turn, our algorithms provide solutions to distributed computations with	asynchronous starts.
 In ~\cite{Wat14}   a different model of networks with asynchronous start is studied, in which
	  inactive nodes do not submit any signal, and hence, unlike in our model, their existence cannot be detected by active node - a property which is essential for our results. 

	\section{ The Model}\label{sec:model}

	We consider a networked system with a {\em fixed} set of $n$ nodes.
	Nodes have  unique identifiers and the set of the $n$ identifiers is denoted by $V$.
	The identities of the nodes are not mutually known, and the network size is unknown as well.

	Each node is initially {\em passive}: it is part of the network, but sends only heartbeats -- that we 
		call {\em null} messages -- and does not change its state. 
	Upon the receipt of a special signal, it  becomes {\em active}, sets up its local variables  (with its initial state), 
		and starts executing its code.

	Execution proceeds in {\em synchronized rounds\/}: in a round $t~(t= 1,2\dots$), each node, be it active or passive,
		attempts to send messages to all nodes,
		receives messages from some nodes, and finally goes to a new state and proceeds to round $t+1$. 
	The round number $t$ is used for a reference, but is unknown to the nodes.
	Synchronized rounds are communication closed in the sense that no node receives messages in round~$t$ that are sent 
		in a round different from~$t$.

	Communications that occur at round~$t$ are modeled by a directed graph~$G(t)=(V, E_t)$ 
		that may change from round to round in dynamic networks~\cite{CFQS11:TVG}.
	We assume a self-loop at each node in all the graphs $G(t)$ since any node can communicate with 
		itself instantaneously.

	In each execution, every node~$u$ is  assumed to receive a unique start signal  at the  {\em beginning} of some round~$s_u$.
	Each execution of the entire system is thus determined by the list~$\left (s_u \right )_{u \in V}$ of rounds 
		at which nodes become active, by the collection of  initial states, 
		and by  the sequence of directed graphs~$\left (G(t) \right )_{t\in\IN}$, that we call a {\em dynamic graph}.

	The way start signals are generated is left  arbitrary: they could be sent by an external oracle (environment), or they could
		be generated endogenously as in the case of diffusive computations initiated by a subset of nodes.
	Similarly, the sequence of directed graphs can be decided ahead of time or, endogenously as in 
		{\em influence systems}~\cite{Cha12}.


	 \subsection{Paths and broken paths in a round interval}

	 We now fix some notation and introduce some terminology that will be used throughout  this paper.
	 First, let us fix an execution of an algorithm with the list of rounds~$\left (s_u \right )_{u \in V}$ at which nodes become active
		and the dynamic communication graph~$\left (G(t) \right )_{t\in\IN}$.

	If  $x_u$ is a local variable of node~$u$, then  $x_u(t)$ denotes the value of $x_u$ 
		at the beginning of round~$t$. 
	Thus $x_u(t)$ is undefined for $t<s_u$.
	We let $G^*(t) = (V, E^*_t) $ denote the directed graph of edges that  transmit non-null messages at round~$t\/$: 
		$(u,v)\in E^*_t$  if and only if
		 it is an edge of $G(t)$ and $u$ is active at round~$t$. 
	We denote the sets of $u$'s incoming neighbors  (in-neighbors for short) in the directed graphs $G(t)$ and $G^*(t)$
		by $\In_u(t)$ and $\In_u^*(t)$, respectively.

	We recall that the {\em product} of two directed graphs $G=(V,E)$ and $H=(V,E')$, denoted $G \circ H$, is 
		the directed graph  with the set of nodes $V$ and with an edge $(u,v)$ 
		 if there exists~$w\in V$ such that $(u,w)\in E$ and $ (w,v) \in E'$.
	For $t' > t\geq 1$, we let $ G(t:t') = G(t) \circ G(t+1)\circ \dots \circ G(t')$, 
		and by convention, $G(t:t)=G(t) $. 
	Similarly, $ G^*(t:t') = G^*(t) \circ G^*(t+1)\circ \dots \circ G^*(t')$.

	Let $\In_u(t  :t' )$ and $\In_u^*(t  :t' )$  denote the sets of $u$'s in-neighbors in $G(t:t')$
		and in $G^*(t:t')$, respectively.
	A directed edge $(v,u)$ of $G(t:t')$ corresponds to a non-empty set  of {\em dynamic paths} of the form 
		$P= (v_0=v,v_1,\dots,v_m=u)$, where $m = t'-t+1$ and 
		$(v_k,v_{k+1})$ is an edge of $G (t+k)$   for each $k=0,\ldots,m-1$. 

	We will say that the dynamic path $P$ is {\em broken} if one of the edges of $P$ carries a null message,
		i.e., $t+k > s_{v_k}$ for some $k \in \{0, \ldots, m-1 \}$.
	For brevity, we will use the terminology of {\em $v {\sim } u $ path} and {\em $v {\sim } u $ broken path} in the {\em round interval 
		$ [t ,t'] $}. 

	\subsection{A hierarchy of synchronization problems}

	Let $A$ be an algorithm with an integer variable $r_u$ for each node $u$,
		which are aimed at simulating synchronous round counters.

	\begin{description}
	\item[Synchronization:] The algorithm $A$ {\em achieves synchronization} if in each execution
	            of $A$ with the start signals   $(s_u)_{u\in V}$,  from some round $t_{synch} \geq \max_{u \in V} (s_u) $ 
	            onward, the $r_u$ counters are incremented by 1 in every round and are all equal, i.e., for every $t \geq t_{synch}$, 
	       \begin{enumerate}
		\item $ r_u (t+1) = r_u(t) + 1 $ 
		\item $ \forall u, v\in V, \ r_u (t) = r_v(t) $.
		\end{enumerate}  
	\end{description}
	In the following each node $u$ is equipped with an additional boolean variable $synch_u$
		initialized to $false$ at round $s_u$.

	\begin{description}
	\item[Synchronization detection:] 
	The algorithm $A$ {\em achieves  synchronization detection} if it achieves synchronization,
		and in addition it guarantees that each node eventually detects that the network is synchronized, i.e., 	
		\begin{enumerate}\setcounter{enumi}{+2}
		\item $ \forall u \in V, \ synch_u (t) = true \Rightarrow t \geq t_{synch}  $ 
		\item $ \forall u \in V,  \ \exists t_u \in \IN , \  \forall t \geq t_u, \  synch_u (t) = true $.
		\end{enumerate}  

	\item[Simultaneous synchronization detection:] 
	The algorithm $A$ {\em achieves simultaneous synchronization detection} if all nodes detect synchronization
		simultaneously, i.e.,	
		\begin{enumerate}\setcounter{enumi}{+4}
		\item $ \forall u, v \in V,  \  \forall t \geq \max\{s_u,s_v\}, \ synch_u (t) = synch_v (t) $. 
		\end{enumerate} 
	\end{description}
	Note that the latter condition of simultaneity actually corresponds to the classical  {\em Firing Squad} problem~\cite{CDDS85, BL87, CD91} 
		(i.e., all nodes can fire when the $synch$ variables are set to $true$).

	\subsection{Completeness and connectivity of dynamic graphs }

	In this paper we consider  the following connectivity conditions in dynamic graphs.

	\begin{defi}
	Let $T$ be a positive integer.
	We say that the dynamic graph $\left( G(t) \right)_{t \geq 1}$ is  {\em $T$-complete } if 
		for every $t\geq 1$, the graph $ G(t: t+T-1) $ is  complete.
	\end{defi}
	Informally, $T$-completeness of a dynamic graph means that a message initiated by any node $u$ in any round $t$ can be broadcasted to all other nodes within $T$ rounds.

	We next define dynamic graphs which enable broadcasts in linear time. 
	For that, we  first introduce the concept of {\em in-connectivity} for directed graphs.
	\begin{defi}
	Let $G =(V,E) $ be a directed graph with at least two nodes and let $c <|V| $ be a positive integer.
	We say that $G$ is {\em  $c$ in-connected} if for any non-null subset $S$,
		the following holds:
		$$ | \Gamma_{\mathrm{in}} (S) \setminus S | \, \geq \min (c, |\overline{S} | )$$
		where $ \Gamma_{\mathrm{in}} (S) $ denotes the set of in-neighbors of $S$ in $G$, and $\overline{S}=V\setminus S$.
	\end{defi}

	Note that  $G$ is $|V|-1$ in-connected iff it is complete and that
		it is 1 in-connected iff it is strongly connected.

	One can define in an analogue way the $c$ out-connectivity of a directed graph. The following shows that these definitions are equivalent.
	\begin{prop}
	Let $G$ be a directed graph. 
	For each positive integer $c$  it holds that $G$ is $c$ in-connected iff it is $c$ out-connected	
	\end{prop}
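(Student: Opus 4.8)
The plan is to prove the contrapositive in one direction and invoke a symmetry to obtain the other. Observe first that reversing every edge of $G$ turns out-neighbors into in-neighbors and vice versa, so the whole statement is symmetric under edge reversal (equivalently, under swapping the roles of ``in'' and ``out''). Hence it suffices to establish a single implication, say that if $G$ fails to be $c$ out-connected then it also fails to be $c$ in-connected; applying this to the reverse graph $G^R$, and using that out-neighbors in $G^R$ are exactly in-neighbors in $G$, immediately yields the converse.

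So I would assume $G$ is not $c$ out-connected and fix a nonempty witness $S$, i.e.\ a set with $|\Gamma_{\mathrm{out}}(S)\setminus S| < \min(c,|\overline{S}|)$. Write $B := \Gamma_{\mathrm{out}}(S)\setminus S$, so that $B\subseteq\overline{S}$ with $|B|<c$ and $|B|<|\overline{S}|$. The latter strict inequality lets me define the nonempty ``far part'' $W := \overline{S}\setminus B$, and it gives the disjoint partition $V = S \sqcup B \sqcup W$. The crux of the argument is the following observation: there is no edge of $G$ from $S$ into $W$, because any such edge $(u,w)$ with $u\in S$ and $w\in\overline{S}$ would place $w$ in $\Gamma_{\mathrm{out}}(S)\setminus S = B$, contradicting $w\in W$.

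From this the conclusion follows directly. Every in-neighbor of $W$ lying outside $W$ must sit in $S\cup B$, but membership in $S$ is excluded by the observation above, so $\Gamma_{\mathrm{in}}(W)\setminus W\subseteq B$ and therefore $|\Gamma_{\mathrm{in}}(W)\setminus W|\le |B|$. Since $\overline{W}=S\cup B$ with $S$ nonempty, we have $|\overline{W}| = |S|+|B| > |B|$; combining this with $|B|<c$ gives $|\Gamma_{\mathrm{in}}(W)\setminus W| \le |B| < \min(c,|\overline{W}|)$. Thus $W$ is a nonempty proper subset witnessing that $G$ is not $c$ in-connected, which completes the implication.

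The bookkeeping—the partition identities and the final inequality—is mechanical; the one genuinely load-bearing step is selecting the correct witness. The naive candidates ($\overline{S}$ itself, or $B$) do not work, and the key idea is to pass to $W=\overline{S}\setminus\Gamma_{\mathrm{out}}(S)$, whose in-boundary is trapped inside $B$ precisely because no edge escapes $S$ into $W$. I expect the only subtle point to be keeping the $\min$ on the right-hand side from spoiling the bound, which is exactly where the hypothesis $|S|\ge 1$ enters to force $|\overline{W}|>|B|$; so I would make sure that the nonemptiness of $S$ is used explicitly.
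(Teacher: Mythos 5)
Your proof is correct and is essentially the paper's own argument: your $B$ and $W$ are exactly the paper's $X = \Gamma_{\mathrm{out}}(S)\setminus S$ and $R = \overline{S}\setminus X$, with the same key observation (no edge from $S$ into $W$) and the same counting. The only differences are cosmetic — you phrase it as a direct contrapositive plus an explicit edge-reversal symmetry, whereas the paper argues by contradiction and dismisses the reverse direction as ``essentially identical.''
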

	\begin{proof}
		We will prove that if $G$ is $c$ in-connected then it is also $c$ out-connected. The proof of the other direction is essentially identical.

		Assume for contradiction that $G$ is $c$ in-connected but not $c$ out-connected. Then there is a proper subset $S$ of $V$ s.t.   $|\Gamma_{\mathrm{out}} (S) \setminus S |=b<c$, and $|\overline{S}|>b$.

	Let $X=\Gamma_{out}(S)\setminus S$, 
	and let $R=\overline{S}\setminus X$. Since  $|X|=b$ and $|\overline{S}|>b$,  $R$ is not empty. By definition of $R$, there is no edge from a node in $S$ to a node in $R$, meaning that $\Gamma_{\mathrm{in}} (R) \setminus R \subseteq X$. Hence  $|\Gamma_{\mathrm{in}} (R) \setminus R|\leq b$. Since $\overline{R}=X\cup S$ contains  $b+|S|>b$ nodes, this contradicts the assumption that $G$ is $c$ in-connected. 
	\end{proof}

	\begin{defi}
	Let $c,T$ be two positive integers.
	We say that the dynamic graph $\left( G(t) \right)_{t \geq 1}$ is  {\em $\cT$ in-connected } if 
		for every $t\geq 0$, the graph $ G(t: t+T-1) $ is  $c$ in-connected.
	\end{defi}

	The  $\cT$ in-connectivity implies that a message initiated by any node $u$ in any round~$t$ can be broadcasted 
		to all other nodes within $\lceil\frac{T}{c}n\rceil$ rounds (for $c=T=1$, this is implied by a basic inequality 
	 	on the length of message chains -- e.g., Lemma 3.2 in~\cite{KLO10} --  and the generalization for arbitrary 
		$c$ and $T$ is straightforward).

	Finally, we present our weakest connectivity assumption which can be seen as $\infty$-completeness:
		a message initiated by a node $u$ in round~$t$ 
		can be broadcasted to all other nodes, but the time required for this broadcasting is unbounded.

	\begin{defi}
	A dynamic graph is said to be {\em eventually  strongly connected} if for every $t\geq 1$, 
		there exists $t'\geq t$ such that the graph $G(t:t')$ is  strongly connected.
	\end{defi}

	In the following we will present algorithms which achieve synchronization in the above models. 
	In the first two models, simultaneous synchronization detection is achieved within constant and linear broadcast time, respectively,
		but with substantially different messages sizes. 
	We will start with the last, weakest model.

	\section{Synchronization with Unbounded Broadcast Time}

	In this section, we show how the nodes in any  dynamic graph that is eventually strongly connected (and hence 
		guarantees broadcasting in finite but unbounded number of rounds)
		can eventually synchronize despite asynchronous starts.
	The synchronization algorithm (Algorithm~\ref{algo:ES}) is simple and does not use identifiers:
		 nodes may be assumed to be  anonymous and to  have computation and storage capabilities 
		 that do not grow with the network size~\cite{HOT11}. 

	First let us introduce one notation for the pseudo-codes of all our algorithms: we use $M_u^*$ to  denote the multiset 
		of non-null messages received by $u$ in the current round. 	 
	Thus $M_u^*$  at round~$t$ is the multiset of messages sent to $u$ by the nodes in $\In^*_u(t)$.
	If non-null messages are vectors of same size, then ${M_u^*}^{(i)} $ denotes the multiset of the $i$-th entries
		of the messages in $M_u^*$.

	\begin{algorithm}[h]
		\small
		\begin{algorithmic}[1]
			\REQUIRE{}
			   \STATE $r_u \in \IN $, initially $0$
			\ENSURE{}
			\STATE send $\langle r_u  \rangle $  to all processes and receive one message from each in-neighbor
			\IF{at least one received message is null}
			      \STATE  $r_u \gets 0$
			\ELSE
				\STATE $r_u \gets 1 + \min_{r\in M_u^*} (r) $      
			\ENDIF

		\end{algorithmic}
		\caption{Algorithm for synchronization  }
		\label{algo:ES}
	\end{algorithm}

	\begin{thm}\label{thm:synchesc}
	Algorithm~\ref{algo:ES}  achieves synchronization in any  dynamic graph that is eventually strongly connected.
	\end{thm}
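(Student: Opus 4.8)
The plan is to show that once every node is active the variables $r_u$ evolve by a pure min-plus recursion, and that eventual strong connectivity forces the spread between the largest and smallest counter down to $0$ and keeps it there. Set $s = \max_{u\in V} s_u$. For every round $t \geq s$ all nodes are active, so no null message is sent or received from round $s$ on; consequently every node always takes the \textbf{else} branch, and since each $G(t)$ carries a self-loop we have $\In_u^*(t) = \In_u(t) \ni u$. Hence for $t \geq s$,
\[ r_u(t+1) = 1 + \min_{v \in \In_u(t)} r_v(t). \]
Writing $m(t) = \min_{u} r_u(t)$, iterating this recursion yields the closed form $r_u(t') = (t'-t) + \min_{v \in \In_u(t:t'-1)} r_v(t)$ for all $t' > t \geq s$; taking the global minimum and using the self-loops gives $m(t+1) = m(t)+1$, so $m(t') = (t'-t) + m(t)$. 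Thus $r_u(t') = m(t')$ holds exactly when some node attaining the minimum at round $t$ lies in $\In_u(t:t'-1)$. In particular, if the product graph $G(t:t'-1)$ is complete then $r_u(t') = m(t')$ for \emph{every} $u$, i.e. all counters agree at round $t'$.

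The core step is therefore to produce, from round $s$ on, an interval over which the product graph is complete. Here I would apply eventual strong connectivity repeatedly to build a chain of consecutive intervals $[s, t_1], [t_1+1, t_2], \ldots, [t_{n-2}+1, t_{n-1}]$ with each $G(t_{j}+1 : t_{j+1})$ strongly connected, so that the composite equals $G(s : t_{n-1})$. The claim is that this composite is complete. This is a reachability-growth argument: fixing a node $u$ and letting $S_j$ denote its set of in-neighbours in the product over the last $j$ blocks, the self-loops give $S_j \subseteq S_{j+1}$, while strong connectivity of the next block forces at least one edge from $\overline{S_j}$ into $S_j$ whenever $S_j \neq V$, so $|S_{j+1}| \geq |S_j| + 1$. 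After at most $n-1$ blocks $S_{n-1} = V$, i.e. $\In_u(s:t_{n-1}) = V$ for all $u$; hence $G(s:t_{n-1})$ is complete, and by the previous paragraph all counters agree at $t^\star = t_{n-1}+1$.

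Finally I would verify persistence: if $r_u(t) = m(t)$ for all $u$ at some $t \geq s$, then $r_u(t+1) = 1 + m(t) = m(t)+1$ for every $u$, so the counters stay equal and each increases by exactly one per round. Setting $t_{synch} = t^\star \geq s = \max_u s_u$ then establishes both clauses of the definition of synchronization.

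I expect the only delicate point to be the combinatorial claim that a composition of $n-1$ strongly connected graphs with self-loops is complete, together with the bookkeeping that lets eventual strong connectivity be applied on disjoint consecutive intervals whose concatenation is exactly the relevant product graph; the min-plus manipulations, the identity $m(t+1)=m(t)+1$, and the persistence argument are then routine.
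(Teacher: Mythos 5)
Your proposal is correct, but it takes a genuinely different route from the paper. The paper's proof is a compact monotone-set argument: for $t \geq s_{\max}$ it defines $W(t)$ as the set of nodes currently holding the \emph{minimum} counter value, observes that self-loops give $W(t) \subseteq W(t+1)$, that any outgoing edge from $W(t)$ in $G(t)$ forces strict growth, and that eventual strong connectivity keeps supplying such outgoing edges while $W(t) \neq V$; finiteness of $V$ then concludes. Your proof instead derives the min-plus closed form $r_u(t') = (t'-t) + \min_{v \in \In_u(t:t'-1)} r_v(t)$ and reduces everything to a combinatorial lemma: the product of $n-1$ strongly connected graphs with self-loops is complete (your $S_j$ growth argument for this lemma is sound, and is in effect the time-reversed mirror of the paper's $W(t)$ argument, growing in-neighborhoods of a fixed node backward in time rather than the minimizer set forward in time). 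Both proofs ultimately rest on the same two ingredients -- self-loops for monotonicity, strong connectivity for strict growth -- but the decomposition differs, and each buys something: the paper's version is shorter and proves only what is needed (every node eventually inherits the minimum), whereas yours proves the stronger statement that after $n-1$ strongly connected windows every node has heard from \emph{every} node, which pins down an explicit synchronization round $t^\star = t_{n-1}+1$ and recovers, as your key lemma, exactly the fact the paper itself invokes later (that any continuously strongly connected dynamic graph on $n$ nodes is $(n-1)$-complete, used for Corollary~\ref{cor:N}). One cosmetic caveat: your completeness lemma needs $n \geq 2$ to make sense of ``$n-1$ blocks''; the case $n=1$ is trivial and worth a sentence.
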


	\begin{proof} 
	For any round $t\geq s_{\max} = \max_{u \in V} (s_u)$, we let 
		$$ W(t) = \big\{  u \in V \ : \   r_u(t) = \min_{v\in V}   \big( r_v(t)\big ) \big \} \enspace.$$
	Because of self-loops, we have $ W (t) \subseteq W (t+1) $.
	Moreover, if $W (t)$ has an outgoing edge in the directed graph $G(t)$, then $ W (t) \neq W (t+1)$.
	Eventual  connectivity of the dynamic graph ensures that from
		some round onward, we have  $W(t) = V$, which implies the theorem.
	\end{proof}

	We now state two useful lemmas about the way the local round counters $r_u$'s
	  evolve in dynamic graphs.

	\begin{lem}\label{lem:tls}
	Assume that $t<t'$ and $s_u\leq t'$.
	Then $r_u(t')$ is defined and:
		\begin{enumerate}
		\item \label{tls1} 
		If there exists a broken path ending at $u$ in the round interval $[t , t'-1]$, then $r_u(t') \leq   t' - t     - 1 $.
	 	\item 	\label{tls2}
		Otherwise, for every $v\in In_u(t:t'-1)$ it holds that $r_v(t)$ is defined  and    $r_u(t') \leq r_v(t) + t' -t$.
		\end{enumerate}
	\end{lem}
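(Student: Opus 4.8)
The engine is a one-step estimate read off directly from the update rule of Algorithm~\ref{algo:ES}. The plan is to first record the following: for any round $\tau$ and any edge $(x,w)$ of $G(\tau)$ with $x$ active at round~$\tau$ (so that $x$ transmits the non-null value $r_x(\tau)$), if $r_w(\tau+1)$ is defined then $r_w(\tau+1)\le r_x(\tau)+1$. Indeed, if $w$ is itself active at round~$\tau$, then either some in-neighbour of $w$ is passive and $w$ resets to $r_w(\tau+1)=0\le r_x(\tau)+1$, or all received messages are non-null and $r_w(\tau+1)=1+\min_{r\in M_w^*}(r)\le 1+r_x(\tau)$ because $r_x(\tau)\in M_w^*$; and if $w$ is passive at round~$\tau$ while $r_w(\tau+1)$ is defined, then $w$ has just started ($s_w=\tau+1$), so $r_w(\tau+1)=0$. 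Symmetrically, if $w$ is active at round~$\tau$ and some $x\in\In_w(\tau)$ is passive at round~$\tau$, then $w$ receives a null message and $r_w(\tau+1)=0$. Definedness of $r_u(t')$ is immediate from $s_u\le t'$.

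For part~\ref{tls2}, fix $v\in\In_u(t:t'-1)$ and a dynamic path $P=(v_0=v,\dots,v_m=u)$ realizing the edge $(v,u)$ of $G(t:t'-1)$, with $m=t'-t$ and edges used at rounds $t,\dots,t'-1$. Since by hypothesis no broken path ends at $u$, the path $P$ is not broken, so every sender $v_k$ ($0\le k\le m-1$) is active at round $t+k$; in particular $v_0=v$ is active at round~$t$, which gives that $r_v(t)$ is defined. I would then telescope the one-step estimate along $P$: each $r_{v_k}(t+k)$ is defined (sender-activity means $s_{v_k}\le t+k$), $r_u(t')=r_{v_m}(t+m)$ is defined since $s_u\le t'$, and the estimate gives $r_{v_{k+1}}(t+k+1)\le r_{v_k}(t+k)+1$ for each $k$. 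Chaining the $m$ inequalities yields $r_u(t')\le r_v(t)+m=r_v(t)+(t'-t)$.

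For part~\ref{tls1}, take a broken path $P=(v_0,\dots,v_m=u)$ with $m=t'-t$, and let $k^\ast$ be the \emph{largest} index in $\{0,\dots,m-1\}$ whose sender $v_{k^\ast}$ is passive at round $t+k^\ast$ (such an index exists because $P$ is broken). The key step is to show $r_{v_{k^\ast+1}}(t+k^\ast+1)=0$: the edge $(v_{k^\ast},v_{k^\ast+1})$ carries a null message at round $t+k^\ast$, so if $v_{k^\ast+1}$ is active at that round it resets to~$0$, while if it is passive at round $t+k^\ast$ then—being defined at round $t+k^\ast+1$, as it is either a later sender or $u$ itself—it must satisfy $s_{v_{k^\ast+1}}=t+k^\ast+1$ and hence equal its initial value~$0$. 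By maximality of $k^\ast$, every sender $v_j$ with $k^\ast<j\le m-1$ is active at round $t+j$, so telescoping the one-step estimate along the suffix $v_{k^\ast+1},\dots,v_m=u$ over its $m-k^\ast-1$ edges gives $r_u(t')\le r_{v_{k^\ast+1}}(t+k^\ast+1)+(m-k^\ast-1)=t'-t-k^\ast-1\le t'-t-1$.

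The telescoping is routine; the delicate part is the bookkeeping at the passive/active boundary. Concretely, I expect the main obstacle to be verifying that each counter is \emph{defined} at exactly the round where it is invoked, and, in part~\ref{tls1}, pinning down $r_{v_{k^\ast+1}}(t+k^\ast+1)=0$ in the case where $v_{k^\ast+1}$ is still passive when it receives the null message. Handling this case uniformly is precisely what forces the choice of $k^\ast$ as the \emph{last} null-sending edge rather than the first, since only then is the entire suffix guaranteed to consist of active senders.
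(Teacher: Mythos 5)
Your proof is correct and follows essentially the same route as the paper's: both arguments locate the \emph{last} null-carrying edge of the broken path, observe that its receiver's counter is $0$ at the following round, and then telescope the one-step bound $r_{v_{k+1}}(\tau+1)\le r_{v_k}(\tau)+1$ along the non-null suffix (and, for part~\ref{tls2}, along the whole unbroken path). Your explicit handling of the definedness bookkeeping at the passive/active boundary (the case $s_{v_{k^\ast+1}}=t+k^\ast+1$) is a point the paper's ``easy induction'' glosses over, but it is a refinement of the same argument, not a different one.
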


	\begin{proof} 
		\begin{enumerate}
		\item    
		Let  $P=(v_t=v,v_{t+1},\dots,v_{t'}=u)$ be the assumed broken path, and
		let $(v_{i-1}, v_{i})$,  be the last edge in $P$ which carries a null message ($ t+1 \leq i \leq t'$). 
		Then  $v_i$ is active at round$ i $, and by line 4 in Algorithm~\ref{algo:ES}, $r_{v_i}(i)=0$ .
		By easy induction, for $k =i+1, \dots ,t'$,  it holds that $r_{v_k}(k) \leq k -i\leq k -t-1$.
		Substituting $k=t'$, we obtain that $r_u(t') \leq t' - t -1$.

		\item   If there is no such broken path, then for each $v {\sim } u$ path $P$  as above,  no edge in $P$ 
		carries a null message, i.e.,  node~$v_k $ is active at round~$k$ for  each $ k=t,\ldots, t'-1 $.\\
		By line~6 in Algorithm~\ref{algo:ES} and a straightforward induction, for $k =t+1, \dots ,t'$,  
		it holds that $r_{v_k}(k) \leq  r_v (t)  +k -t$.
		Substituting $k=t'$, we obtain that $r_u(t') \leq  r_v (t)  + t' -t $.
		\end{enumerate}
	\end{proof}

	\begin{lem}\label{lem:smax}
	For every node~$u$ and at every round~$t \geq s_{\max} = \max_{v \in V} (s_v)$, we have $r_u(t) \geq t - s_{\max}$. 
	Moreover, if $t \geq s_{\max} +1$ and $\/\In_u(s_{\max} :t -1)$ contains a node $v$ such that $s_v = s_{\max}$, 
		then $r_u(t) = t - s_{\max}$. 
	\end{lem}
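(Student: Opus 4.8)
The plan is to treat the two assertions separately, exploiting the single structural fact that for $t \geq s_{\max}$ every node is already active, so that at every such round all transmitted messages are non-null. Consequently the reset branch (line~4 of Algorithm~\ref{algo:ES}) is never executed from round $s_{\max}$ onward, and the counter update reduces to $r_u(t+1) = 1 + \min_{v \in \In_u(t)} r_v(t)$, where the minimum ranges over a nonempty set because of the self-loops.

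For the lower bound $r_u(t) \geq t - s_{\max}$ I would argue by induction on $t \geq s_{\max}$, simultaneously for all nodes. The base case $t = s_{\max}$ is immediate since $r_u$ is a nonnegative integer, so $r_u(s_{\max}) \geq 0 = s_{\max} - s_{\max}$. For the inductive step, the min-plus-one update above together with the induction hypothesis $r_v(t) \geq t - s_{\max}$ for every $v$ gives $r_u(t+1) = 1 + \min_{v} r_v(t) \geq 1 + (t - s_{\max}) = (t+1) - s_{\max}$.

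For the equality I would obtain the matching upper bound from Lemma~\ref{lem:tls}, applied with starting round $s_{\max}$ and terminal round $t$ (the hypotheses $s_{\max} < t$ and $s_u \leq t$ hold since $t \geq s_{\max} + 1 \geq s_u + 1$). The first step is to rule out the broken-path case~\ref{tls1}: a broken $v \sim u$ path in the interval $[s_{\max}, t-1]$ would carry a null message on some edge $(v_j, v_{j+1})$ with $j \geq s_{\max}$, whose sender $v_j$ would then be passive at round $j$, i.e.\ $s_{v_j} > j \geq s_{\max}$, contradicting the maximality of $s_{\max}$. Hence we are in case~\ref{tls2}, and for every $v \in \In_u(s_{\max}:t-1)$ we get $r_u(t) \leq r_v(s_{\max}) + (t - s_{\max})$. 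Choosing the in-neighbor $v$ furnished by the hypothesis, with $s_v = s_{\max}$, we have $r_v(s_{\max}) = 0$ (node $v$ has just been awakened and initialized at round $s_{\max}$), so $r_u(t) \leq t - s_{\max}$. Combining this with the lower bound yields $r_u(t) = t - s_{\max}$.

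The only delicate points, which I would flag as the crux, are (i) justifying that no resets occur for $t \geq s_{\max}$, which is what makes both the clean induction and the absence of broken paths work, and (ii) checking that the chosen in-neighbor indeed has $r_v(s_{\max}) = 0$; both rest entirely on the defining maximality of $s_{\max}$. Everything else is routine bookkeeping with the definitions of $\In_u(\cdot:\cdot)$ and the update rule.
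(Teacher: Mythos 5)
Your proof is correct and follows essentially the same route as the paper's: an easy induction for the lower bound (resting on the fact that no resets occur once all nodes are active), and the non-broken case of Lemma~\ref{lem:tls} applied on the interval $[s_{\max}, t-1]$ with the witness $v$ satisfying $r_v(s_{\max}) = 0$ for the matching upper bound. If anything, your write-up is slightly more careful than the paper's, which cites case~\ref{tls1} of Lemma~\ref{lem:tls} where (as you correctly identify) case~\ref{tls2} is the one actually being used.
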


	\begin{proof} 
	By the definition of $s_{max}$, for each node $u$ we have $r_u(s_{max})\geq 0$, and an easy induction 
		on  $t\geq s_{\max}$ shows that $r_u(t) \geq t - s_{\max}$.

	If $\/\In_u(s_{\max} :t -1)$ contains a node $v$ such that $s_v = s_{\max}$, then  there is
		a $v {\sim } u $ path  in the round interval $ [s_{max} , t -1] $.
	Since all nodes are active on round $s_{\max}$ onward, only non-null messages are sent and all the  $v{\sim }u$ paths in the round interval $[s_{max},t -1]$
		are non broken.  
	The opposite inequality $r_u(t) \leq t - s_{\max}$ now follows from Lemma~\ref{lem:tls}.\ref{tls1} and $r_v \big( s_{\max} \big) =0$.
	\end{proof}

	\section{Simultaneous Synchronization Detection with Constant Time Broadcasting}

	We now show that the synchronization of the round counters can be detected
		 in any $T$-complete dynamic graph.
	Synchronization detection can be achieved simultaneously by all the nodes in $O(T)$ time 
		using only  $O(\log(T))$ bits per message.

	\begin{algorithm}[h]
		\small
		\begin{algorithmic}[1]
			\REQUIRE{}
			\STATE $r_u \in \IN $, initially $0$ 
			\STATE $synch_u \in \{true, false\}$, initially $false$ 
			\ENSURE{}
			\STATE send $\langle r_u  \rangle $  to all processes and receive one message from each in-neighbor
			\IF{at least one received message is null}
				\STATE  $r_u \gets 0$
			\ELSE
				\STATE $r_u \gets 1 + \min_{r\in M_u^*} (r) $
			\ENDIF
			\IF{  $ r_u \geq T$}
				\STATE $synch_u \gets true $
				\ENDIF
		\end{algorithmic}
		\caption{ Simultaneous synchronization detection with $T$-completeness}
		\label{algo:complete}
	\end{algorithm}

	\begin{thm}\label{thm:completeness}
	Algorithm~\ref{algo:complete} achieves Simultaneous Synchronization  in any execution  on a $T$-complete dynamic graph.
	Specifically, all nodes   detect the synchronization 
	 	of the $r_u$ counters   exactly $T$ rounds after all nodes have become active. 
	\end{thm}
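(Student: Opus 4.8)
The plan is to locate the exact round at which each counter $r_u$ first reaches the threshold $T$ and to show it is the same round for every node, namely $s_{\max}+T$ with $s_{\max}=\max_{u\in V}(s_u)$. In Algorithm~\ref{algo:complete} the flag $synch_u$ is set, irreversibly, in precisely those rounds where the freshly updated counter satisfies $r_u\geq T$; consequently $synch_u(t)=true$ if and only if $r_u(\tau)\geq T$ for some $\tau\leq t$. It therefore suffices to establish two facts: (i) $r_u(t)\leq T-1$ for every node $u$ and every round $t\leq s_{\max}+T-1$, and (ii) $r_u(s_{\max}+T)\geq T$ for every node $u$. Fact (i) forces $synch_u(t)=false$ for all $t\leq s_{\max}+T-1$, while (ii) forces $synch_u(s_{\max}+T)=true$; together they give simultaneous detection exactly at round $s_{\max}+T$, i.e.\ $T$ rounds after the last node wakes.

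Fact (ii) is immediate from Lemma~\ref{lem:smax}, which gives $r_u(t)\geq t-s_{\max}$ for all $t\geq s_{\max}$; specializing to $t=s_{\max}+T$ yields $r_u(s_{\max}+T)\geq T$.

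Fact (i) is where $T$-completeness is used, and I expect it to be the main step. Fix $u$ and a round $t\leq s_{\max}+T-1$. For the small rounds $t\leq T$ I would use only the self-loop: each counter grows by at most one per round from its initial value $0$, so $r_u(t)\leq t-s_u\leq t-1\leq T-1$. For $T+1\leq t\leq s_{\max}+T-1$ I would exhibit a broken path ending at $u$ in the interval $[t-T,\,t-1]$ and apply Lemma~\ref{lem:tls}(\ref{tls1}) with that interval, which gives $r_u(t)\leq t-(t-T)-1=T-1$. The path is obtained as follows: by $T$-completeness the graph $G(t-T:t-1)$ is complete, so there is a dynamic path in $[t-T,t-1]$ from the last-awakening node $w$ (the one with $s_w=s_{\max}$) to $u$; and since $t\leq s_{\max}+T-1$ we have $t-T\leq s_{\max}-1<s_w$, so $w$ is still passive in round $t-T$ and the first edge of this path carries a null message, making the path broken. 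The one delicate point is the seam between the two regimes: treating $t\leq T$ by the trivial increment bound is what lets me avoid the illegal index $t-T<1$ in the broken-path argument.

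Finally, I would observe that the same ingredients yield full synchronization of the counters, not merely detection. For $t\geq s_{\max}+T$ the complete window $G(s_{\max}:s_{\max}+T-1)$, extended by self-loops up to round $t-1$, guarantees that $\In_u(s_{\max}:t-1)$ contains a node $v$ with $s_v=s_{\max}$; hence the second part of Lemma~\ref{lem:smax} gives $r_u(t)=t-s_{\max}$ for every $u$. Thus from round $s_{\max}+T$ onward all counters are equal and increase by one each round, which establishes synchronization conditions (1)--(2) with $t_{synch}=s_{\max}+T$, while facts (i)--(ii) establish the detection and simultaneity conditions (3)--(5). This completes the proof that detection occurs simultaneously, exactly $T$ rounds after all nodes have become active.
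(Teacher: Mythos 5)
Your proof is correct, but it is organized differently from the paper's, so a comparison is worthwhile. The paper argues by extremality: it takes $t_0$ to be the first round at which some counter is set to $T$, and applies Lemma~\ref{lem:tls}.\ref{tls1} in the contrapositive --- since $r_u(t_0+1)=T$, no broken path can end at $u$ in $[t_0-T+1,\,t_0]$, and since $T$-completeness makes $\In_u(t_0-T+1:t_0)=V$, every node (in particular the last-awakened one) must already be active at round $t_0-T+1$; Lemma~\ref{lem:smax} then gives $T=t_0+1-s_{\max}$ and forces every counter to equal $T$ at round $t_0+1$. You instead name the detection round $s_{\max}+T$ in advance and use Lemma~\ref{lem:tls}.\ref{tls1} in the direct direction, exhibiting for every node $u$ and every round $t\leq s_{\max}+T-1$ a concrete broken path: completeness of $G(t-T:t-1)$ supplies a path from the last-awakened node $w$ to $u$, and the passivity of $w$ at round $t-T$ breaks it. The two arguments are essentially contrapositive reorganizations of the same idea and rest on exactly the same two lemmas, but they buy different things. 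The paper's extremal choice of $t_0$ makes the window index automatically legal (from $r_u(t_0+1)=T$ and the unit-increment bound one gets $t_0\geq T$, hence $t_0-T+1\geq 1$), so it avoids your two-regime seam; your version, at the cost of that case split, pins the detection round from both sides and explicitly verifies the remaining clauses of the definition --- in particular the continued agreement and unit increments of the counters from round $s_{\max}+T$ onward, via the self-loop extension of the complete window and the second claim of Lemma~\ref{lem:smax} --- which the paper's proof leaves implicit after concluding that all nodes first set their counters to $T$ in the same round.
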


	\begin{proof}
	First, observe that by the first claim in Lemma~\ref{lem:smax}, the condition in line~9, namely 
		$ r_u \geq T  $, eventually holds at each node~$u$.
	Moreover, $r_u$ may increase by at most 1 in every round, and thus
		hence there exists at least one round at which  $r_u$ is equal to $T$.

	Let $t_0$ be the first round at which some node~$u$ sets its variable~$r_u$  to $T$. 
	Then $r_u(t_0 +1)=T  $ and 
		\begin{equation}\label{eq:leq}
		 	\forall   v\in V ,  \ \ r_v(t_0 +1)\leq T   \enspace.
		\end{equation}
	Note also that  $t_0 \geq  r_u(t_0)$, and hence   $t_0 -T +1 \geq 1$.

	By $T$-completeness,  $ \In_u(t_0-T +1 : t_0 )=V$.
	Since $r_u(t_0+1)=T$, Lemma \ref{lem:tls}.\ref{tls1} implies that there is no broken path ending at node~$u$ 
		in the interval $[ t_0-T+1 , t_0 ]$.
	Hence,  $s_{max}=\max_{v \in V} (s_v)\leq t_0-T+1$ and $ \In^*_u(t_0 - T+1: t_0 ) = \In_u(t_0 - T+1: t_0 ) = V$.

	In particular, $\In^*_u(t_0 - T+1: t_0 ) $ contains the latest woken-up nodes. Hence, by Lemma~\ref{lem:smax}, 
		$T=r_u(t_0 +1) = t_0 - s_{\max} + 1 $
		and for every node $v\in V$, $ r_v(t_0 +1) \geq r_u(t_0 +1) = T $.
		Since by (\ref{eq:leq}) above $ r_v(t_0 +1)\leq T$, we get that $r_v(t_0+1)=T$ for all $v\in V$. By the definition of $t_0$, we also have that  $r_v(t')<T$ for $t'\leq t_0$ and for all $v\in V$. Thus all nodes set their local counters to $T$ at round~$t_0$ for the first time.
	 \end{proof}

	Let the {\em synchronization time} be the number of rounds from the time the last node is waked up till (simultaneous) 
		synchronization is achieved. 
	Then the synchronization time of any execution of Algorithm~\ref{algo:complete} is in $O(T)$, and it uses messages 
		of size $O(\log T)$.
		
		\section{Simultaneous Synchronization Detection with Linear Time Broadcasting}

		We now present Algorithm~\ref{algo:connectivity} for the simultaneous detection of the counters~$r_u\,$
			in dynamic graphs that are not $T$-complete, but still enjoy good connectivity, namely $\cT$ in-connectivity, which as mentioned earlier enables broadcasting in time which is linear in the number of nodes.
		As opposed to the previous algorithm, Algorithm~\ref{algo:connectivity} requires unique node identifiers and long messages.
		Indeed, each node $u$ maintains a variable $HO_u$ that contains the identifiers of all the active nodes 
			of which $u$ has heard of since it became active, and broadcasts $HO_u$ in each round.

		\begin{algorithm}[h]
			\small
			\begin{algorithmic}[1]
				\REQUIRE{}
				\STATE $r_u \in \IN$, initially $0$ 
				\STATE $synch_u \in \{true, false\}$, initially $false$ 
				\STATE $HO_u \subseteq V $, initially $\{ u \}$ 
				\ENSURE{}
				\STATE send $\langle r_u,HO_u  \rangle $  to all processes and receive one message from each in-neighbor
				\IF{at least one received message is null}
					\STATE  $r_u \gets 0$
				\ELSE
					\STATE $r_u \gets 1 + \min_{r\in {M_u^*}^{(1)}} (r) $
				\ENDIF
				\STATE $HO_u \gets \cup_{HO \in {M_u^*}^{(2)}} HO  $ 
				\IF{  $ | HO_u | \,  \leq    \left\lceil    \frac{c }{ T}  \left( r_u +1 \right  )\right\rceil  -c \, $}
					\STATE $synch_u \gets true $
					\ENDIF
			\end{algorithmic}
			\caption{ Simultaneous synchronization detection  with in-connectivity}
			\label{algo:connectivity}
		\end{algorithm}

		The correctness proof of the algorithm relies on the following inequality, which can be seen as a generalization 
		   of a basic inequality established 
		 	for {\em global round numbers}~$t$   (e.g., Lemma 3.2 of~\cite{KLO10}) 	to {\em local round counters}~$r_u$. 

		\begin{lem}\label{lem:cT}
		In each execution of Algorithm \ref{algo:connectivity} in a $\cT$ in-connected dynamic graph,  
			for each node $u$ and each round $t \geq s_u$, it holds that 
			\begin{equation}  | HO_u(t) | \, \geq  \min \left(  (1-c) +     \frac{c }{ T}   \left( r_u(t) +1 \right )  , n  \right ) \enspace . \label{eqn:l8}
			\end{equation}
		\end{lem}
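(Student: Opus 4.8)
The plan is to exploit the combinatorial meaning of both variables. The set $HO_u(t)$ is exactly the collection of nodes that reach $u$ along a non-broken dynamic path arriving by round~$t$, whereas a large value of $r_u(t)$ certifies that the recent past of $u$ is free of broken paths. So I would fix $t\geq s_u$, write $k=r_u(t)$, and first invoke Lemma~\ref{lem:tls}.\ref{tls1} in contrapositive form: since $r_u(t)=k$, for every integer $j$ with $jT\leq k$ there is \emph{no} broken path ending at $u$ in the round interval $[t-jT,\,t-1]$, since such a path would force $r_u(t)\leq jT-1<k$. In particular each of these windows lies in the active regime, so the update of $HO_u$ propagates identifiers cleanly across them, and every window $[t-jT,\,t-(j-1)T-1]$ is a genuine $T$-window on which $c$ in-connectivity applies.

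Next I would introduce the nested reachability sets $R_j=\In_u(t-jT:t-1)$ for $j=0,1,\dots,\lfloor k/T\rfloor$, with $R_0=\{u\}$. Because no broken path ends at $u$ over $[t-jT,\,t-1]$, each $v\in R_j$ is active at round $t-jT$ and its identifier is carried to $u$ by round~$t$; hence $R_j\subseteq HO_u(t)$. The crux is to bound $|R_j|$ from below. Splitting the product graph as $G(t-jT:t-1)=G(t-jT:t-(j-1)T-1)\circ G(t-(j-1)T:t-1)$ shows that $R_j=\Gamma_{\mathrm{in}}(R_{j-1})$ computed in the $T$-round graph $G(t-jT:t-(j-1)T-1)$, which is $c$ in-connected. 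Applying the in-connectivity inequality to $S=R_{j-1}$ (the self-loops give $R_{j-1}\subseteq R_j$) yields $|R_j|\geq |R_{j-1}|+\min(c,\,n-|R_{j-1}|)$, and an immediate induction from $|R_0|=1$ gives $|R_j|\geq \min(1+jc,\,n)$.

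Finally I would take $j=\lfloor k/T\rfloor$ and combine $|HO_u(t)|\geq|R_j|\geq\min(1+\lfloor k/T\rfloor c,\,n)$ with the elementary inequality $1+\lfloor k/T\rfloor c\geq (1-c)+\tfrac{c}{T}(k+1)$, which after dividing by $c$ reduces to $T\lfloor k/T\rfloor\geq k+1-T$, i.e.\ to $k\bmod T\leq T-1$ and is therefore always true. This delivers exactly $|HO_u(t)|\geq\min\big((1-c)+\tfrac{c}{T}(r_u(t)+1),\,n\big)$. The small cases $k<T$ (so $j=0$) are covered for free, since there the right-hand side is at most $1=|R_0|\leq|HO_u(t)|$.

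The main obstacle is getting the direction of the argument right. Information in $HO_u$ flows along edges \emph{toward}~$u$, so the sets that grow are in-neighborhoods of $u$, and it is essential to apply in-connectivity to the reachable set $R_{j-1}$ rather than to $\{u\}$ or to $HO_u(t)$ itself: applying it to $HO_u(t)$ would compose the recorded past history with the future window edges in the wrong time order and fail. The second delicate point is the clean-region bookkeeping, namely checking that the no-broken-path guarantee extracted from Lemma~\ref{lem:tls} matches the window decomposition precisely enough that every $R_j$ with $jT\leq r_u(t)$ is truly a subset of $HO_u(t)$.
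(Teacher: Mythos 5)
Your proof is correct and is essentially the paper's own argument: the same decomposition into $T$-round windows, the same growth-by-$c$ step obtained by applying $c$ in-connectivity to the nested in-neighborhoods of $u$, and the same use of Lemma~\ref{lem:tls}.\ref{tls1} to tie broken paths to $r_u(t)$. The only difference is organizational: the paper grows its sets $S_i$ until a stopping condition and then distinguishes three termination cases (no incoming edge, a broken path, or $i=q$), whereas you invoke Lemma~\ref{lem:tls}.\ref{tls1} in contrapositive form up front to certify that all $\lfloor r_u(t)/T\rfloor$ windows are clean, so that case analysis collapses into the single bound $\min(1+jc,\,n)$.
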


		\begin{proof}

		For  $t = s_u $, we have $r_u(t)=0$ and  the proof is immediate.

		Suppose now that $t\geq s_u +1$, and let $q$ and $r$ two nonnegative integers such that 
			$ t = qT + r $  with $1 \leq r  \leq T$.
		By induction, we construct a sequence of $q+1$ sets of nodes  $ S_0,S_1, \dots, $  as follows:
			\begin{enumerate}
			\item $S_0 =  \{ u \} $.
			\item Suppose that $S_0,  \dots, S_i$ are defined and  $i < q $.
			We let $H_{i+1}=G \big( t-(i+1)T   : t-iT-1 \big)$. 
			We distinguish three cases.
				\begin{enumerate}
				\item	$H_{i+1}$ contains  no edge $(w,v)$ such that  $w\not\in S_i$ and $v\in S_i$. 
				Then the construction stops.
				\item
				$H_{i+1}$ contains an edge $(w,v)$ such that  $w\not\in S_i$ and $v\in S_i$, and that corresponds to a $w {\sim } v$ broken path
				in the round interval $[ t-(i+1)T  , t-iT-1 ]$.   
				Then the construction stops.
				\item Otherwise, we let   $S_{i+1}=  \In_u ( t-(i+1)T   : t )$.
				\end{enumerate}
			\end{enumerate}
		Let $S_k$ denote the last set in the sequence. \\
		A straightforward induction shows that for each node $v\in S_i$ there is a $ v {\sim } u $ path  in the 
			round interval $ [ t- iT ,  t ]$  which is not broken.\\

		Using the $(c,T)$ connectivity of G, an easy induction shows that the cardinality of each set $S_i$ is at least $c\,i+1$,
			except for the case when the construction is terminated by (a) above, 
			in which case $|S_k|$ may be smaller than $ck+1$.

		Similarly, by lines (3) and (10) of the algorithm and by the conditions (b) and (c) above, an easy induction shows that 
			$HO_u(t) $ contains every set $S_i$ for $0 \leq i \leq k$, and in particular $S_k\subseteq HO_u(t)$. 
		Hence if the construction is terminated by (b), then the cardinality of $HO_u(t)$ is at least $ c k  +1$.

		We now distinguish the following three cases:
			\begin{description}
			\item [Construction terminated by (a):] 
			By the $\cT$ in-connectivity of the dynamic graph,  this implies that $S_k = V$.
			It follows that $HO_u(t) =V$ and Lemma~\ref{lem:cT} trivially follows.
			\item [Construction terminated by (b):]
			We first observe that the assumed $w {\sim } v$ broken path in the round interval  $\big [ t-(k+1)T  , t- kT-1 \big ]$ can be extended 
			to a $w {\sim } u $ broken path in the interval $\big [ t-(k+1)T  , t \big ]$. 
			This implies by Lemma~\ref{lem:tls}.\ref{tls1} that
			$ r_u(t)\leq (k+1)T-1$ or equivalently that   $    k\geq \frac{r_u(t)+1}{T}-1$.  
		 Thus we get
		 	$$  |HO_u(t)|\geq ck+1\geq c\left (   \frac{r_u(t)+1}{T}-1\right )+1
				  	 =   (1-c) +     \frac{c }{ T}   \left( r_u(t) +1 \right ) \enspace .$$ 
			 \item [Construction terminated by $k=q\,$:] 
			 Thus we get   $  |HO_u(t)|\geq cq+1 $.
			 As an immediate consequence of Lemma~\ref{lem:tls}.\ref{tls1}  and~\ref{lem:tls}.\ref{tls2},
			 we have $r_u(t) \leq t-1 $.
			 With $r \leq T$, it follows that $r_u(t) \leq T(q+1) -1 $, and
			 $$  |HO_u(t)|\geq  (1-c) +     \frac{c }{ T}   \left( r_u(t) +1 \right ) $$ 
			 which also gives Lemma~\ref{lem:cT}   in this last case.
			\end{description}
		\end{proof}

		\begin{thm}\label{thm:cT}
		In any  $\cT$ in-connected dynamic graph, all nodes in Algorithm~\ref{algo:connectivity}  achieve simultaneous  
			synchronization detection.
		Synchronization is detected   in less than $ \left\lceil \frac{T}{ c}   \left( n-1 \right ) \right\rceil +T $ rounds  
			after all nodes have become active. 
		\end{thm}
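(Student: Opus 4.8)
The plan is to derive the entire theorem from the single inequality of Lemma~\ref{lem:cT}, which couples the ``knowledge'' $|HO_u(t)|$ to the local counter $r_u(t)$, together with Lemma~\ref{lem:smax}, which pins the counters down once all nodes are awake. The heart of the matter is to show that the firing test on line~11 is passed by $u$ \emph{exactly} when $r_u$ reaches a single critical value $\rho^{*}$ depending only on $n,c,T$; since the counters become equal and increase in lockstep \emph{before} that value is reached, all nodes fire in the same round.

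First I would establish a two-sided characterization of the test $|HO_u|\le \lceil \frac{c}{T}(r_u+1)\rceil - c$. For the ``no premature firing'' direction, suppose $|HO_u(t)| < n$; then the minimum in Lemma~\ref{lem:cT} cannot equal $n$, so $|HO_u(t)| \ge (1-c)+\frac{c}{T}(r_u(t)+1)$, and combining this with the elementary bound $\lceil x\rceil < x+1$ gives
\[
 \left\lceil \tfrac{c}{T}\left(r_u(t)+1\right)\right\rceil - c \;<\; \tfrac{c}{T}\left(r_u(t)+1\right) + 1 - c \;\le\; |HO_u(t)| ,
\]
so the test fails. Hence firing forces $|HO_u|=n$, in which case the test reduces to $\lceil \frac{c}{T}(r_u+1)\rceil \ge n+c$. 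Solving this ceiling inequality over the integers yields $\rho^{*}=\lfloor \frac{T}{c}(n-1)\rfloor + T$: for $r_u\ge \rho^{*}$ one has $\frac{c}{T}(r_u+1) > n+c-1$, so Lemma~\ref{lem:cT} forces $|HO_u|=n$ automatically and the test passes, while for $r_u<\rho^{*}$ the threshold is below $n$ and the test fails in both the $|HO_u|=n$ and $|HO_u|<n$ cases. Thus $u$ fires precisely in the first round in which $r_u=\rho^{*}$, and $synch_u$ latches to $true$ thereafter.

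Next I would settle simultaneity and timing. By Lemma~\ref{lem:smax}, as soon as $\In_u(s_{\max}:t-1)$ contains a latest-awakened node -- which by $\cT$ in-connectivity reachability holds for every $u$ once $t-s_{\max}\ge \lceil \frac{T}{c}(n-1)\rceil =: \tau$ -- we have $r_u(t)=t-s_{\max}$ \emph{simultaneously} for all $u$, so conditions~1 and~2 hold from round $t_{synch}=s_{\max}+\tau$ on. Since $T\ge 1$ gives $\rho^{*}=\lfloor \frac{T}{c}(n-1)\rfloor + T \ge \lceil \frac{T}{c}(n-1)\rceil = \tau$, the counters are already synchronized by the time any of them reaches $\rho^{*}$; hence every $r_u$ hits $\rho^{*}$ in the same round $s_{\max}+\rho^{*}\ge t_{synch}$. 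This yields simultaneous firing (condition~5) and firing only after synchronization (condition~3); eventual firing (condition~4) is immediate from $r_u\to\infty$. Reading off $t_{det}-s_{\max}=\rho^{*}=\lfloor \frac{T}{c}(n-1)\rfloor + T \le \lceil \frac{T}{c}(n-1)\rceil + T$ gives the stated bound.

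The main obstacle is the alignment encoded in the first step: I must show that the growth of the knowledge set $|HO_u|$ and the growth of the counter $r_u$ become ``simultaneously sufficient,'' so that the one test on line~11 certifies \emph{both} full knowledge and synchronization at once. This is exactly what Lemma~\ref{lem:cT} buys -- the threshold $\lceil \frac{c}{T}(r_u+1)\rceil - c$ is calibrated so that $|HO_u|<n$ keeps the test strictly failing, whereas $r_u\ge \rho^{*}$ forces $|HO_u|=n$ and passes it. Making this inequality chain tight (the strict step $\lceil x\rceil<x+1$ and the integer solution for $\rho^{*}$) is the only place where care with floors, ceilings, and the off-by-one in round indexing is genuinely needed; everything else is routine.
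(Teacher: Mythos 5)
Your first step is correct, and it is in fact sharper than what the paper proves: combining Lemma~\ref{lem:cT} with $\lceil x\rceil<x+1$ does show that the test of line~11 passes at $u$ exactly when $r_u\ge\rho^{*}=\lfloor\frac{T}{c}(n-1)\rfloor+T$, and that passing it forces $HO_u=V$. The gap is in your second step, at the sentence ``the counters are already synchronized by the time any of them reaches $\rho^{*}$.'' What you have established is only that $r_u(t)=t-s_{\max}$ for all $u$ once $t\ge s_{\max}+\tau$; for earlier rounds Lemma~\ref{lem:smax} gives just the one-sided bound $r_u(t)\ge t-s_{\max}$. A node awakened long before $s_{\max}$ can have a counter far larger than $t-s_{\max}$ (up to $t-s_u$) as long as no broken path from a still-passive node has reached it, and nothing you cite bounds such run-ahead counters from above. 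Hence your argument does not exclude that some counter hits $\rho^{*}$ --- and, by your own characterization, fires --- strictly before round $s_{\max}+\rho^{*}$, possibly even before $s_{\max}$, which would destroy both simultaneity (condition~5) and no-premature-detection (condition~3). Ruling this out is precisely the crux, and it is the step the paper's proof supplies: if $r_w(t)\ge\rho^{*}$ then Lemma~\ref{lem:cT} gives $HO_w(t)=V$; since $HO_w$ contains only nodes joined to $w$ by \emph{non-broken} paths (inclusion~(\ref{eq:HO})), $HO_w(t)=V$ yields a non-broken path from the last-awakened node $v$, so $v\in\In_w^{*}(s_{\max}:t-1)$, and then the second claim of Lemma~\ref{lem:smax} forces $r_w(t)=t-s_{\max}$, i.e.\ $t\ge s_{\max}+\rho^{*}$. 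You never take this step --- you invoke Lemma~\ref{lem:smax} only through reachability of $v$ inside a sufficiently long window --- so the threshold argument does not close.

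A secondary error: your reachability constant is wrong. What $(c,T)$ in-connectivity gives, by growing the out-set of $v$ over disjoint blocks of $T$ rounds, is that $v$ reaches every node within $T\lceil\frac{n-1}{c}\rceil$ rounds, which can exceed your $\tau=\lceil\frac{T}{c}(n-1)\rceil$, and the smaller bound is genuinely false. For instance take $V=\{u,v,w_1,w_2\}$, $c=2$, $T=3$ (so $\tau=5$): let $G(1)=G(2)$ have all edges except that $v$ keeps only its self-loop, let $G(3)$ be complete except for the edge $(v,u)$, let $G(4)=G(5)$ confine the out-edges of $v,w_1,w_2$ to $\{v,w_1,w_2\}$ while $u$ points to everyone, and take complete graphs from round $6$ on. Every window of $3$ consecutive rounds is $2$ in-connected, yet $v\notin\In_u(1:5)$; reaching $u$ from $v$ takes $6=T\lceil\frac{n-1}{c}\rceil$ rounds. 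This particular error is repairable, since $T\lceil\frac{n-1}{c}\rceil\le\lfloor\frac{T}{c}(n-1)\rfloor+T=\rho^{*}$ still holds, which is all your outline needs at that point; but even with the corrected constant the missing ``no counter runs ahead'' argument of the first paragraph remains, so the proof as written is incomplete.
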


		\begin{proof} 


		First, observe that by the first claim in Lemma~\ref{lem:smax} and the fact that the cardinality of each set~$HO_u$
			is at most  $n$,  the condition in line~11 eventually holds at each node~$u$. 
		Moreover, we easily check that at each round $t\geq s_u$, it holds that
			\begin{equation}\label{eq:HO}
			   HO_u  (t +1)   \subseteq \bigcup_{s \geq s_u} \In_u^* (s : t)    \enspace.
			\end{equation}

		Let $t_0$ be the first round at which the condition in line~11 holds at some node,
			and let~$u$ denote one node such that	
			\begin{equation}\label{eq:11}
			  | HO_u  (t_0 +1) | \,  \leq    \left\lceil    \frac{c }{ T}  \left( r_u (t_0 +1)  +1 \right  )\right\rceil  -c \,    \enspace.
			\end{equation}
		From Lemma~\ref{lem:cT}, we deduce that  $HO_u(t_0 +1) = V $.
		In particular, $HO_u(t_0 +1 ) $ contains the latest woken-up nodes. Let~$v$ denote one such node, ie $s_v= s_{\max}$.
		By (\ref{eq:HO}), there is a $v {\sim } u$ non broken path  in some round interval $[ s, t_0]$  with $s\geq s_u$.
		It follows that $s\geq s_v$.
		Thereby $t_0 \geq s_{\max} $ and $v \in \In_u^* (s_{\max} : t_0) $.
		This implies, by  Lemma~\ref{lem:smax}, that 
			$$r_u(t_0+1) = r_v(t_0+1)=t_0+1-s_{max}=\min_{w\in V}r_w(t_0+1) \enspace .$$
			 Using Lemma~\ref{lem:cT}=10 again we get that for all $w\in V$, $H_w(t_0+1)=V$. 
		Therefore the inequality (\ref{eq:11}) holds for all nodes in round $t_0+1$, and by the definition of $t_0$ this is the first round in which  this inequality holds for all nodes.

		\end{proof}

		Considering that $c$ and $T$ are constants,  the synchronization time of Algorithm~\ref{algo:connectivity}  
			is $O (n)$, and it uses messages of size $O ( n\log n )$.

		A close examination of the proof of Theorem~\ref{thm:cT}, shows that 
			each node actually computes the set $V$, and so its cardinality.
			In other words, Algorithm~\ref{algo:connectivity} solves the problem of {\em  counting} the network size
			despite asynchronous starts in any $\cT$ in-connected dynamic graph, and in particular in any continuously 
			strongly connected dynamic graph.

		This should be compared  with the impossibility result established by
			Wattenhofer~\cite{Wat14}, which shows that if inactive nodes do not transmit any signal, 
			then counting is impossible with asynchronous start. In other words,
			in the latter network model  passive nodes are not  considered as 
			part of the network:
			the set of nodes in the network is thus time-varying  while we assume a fixed set $V$ of nodes.

			\section{Synchronization with Bounds on Network Size}\label{sec:size1}

			In this section, we show that knowledge of bounds on the network size can sometimes improve our synchronization algorithms.

			\subsection{Simultaneous Synchronization Detection with linear broadcast time and short messages}\label{sec:size}

			We now show that knowledge on network size can be used for reducing the message size 
			 	required for linear-time simultaneous synchronization in $\cT$ in-connected dynamic networks.
			For the sake of simplicity, we assume $c=T=1$ , i.e., dynamic graphs are continuously strongly connected,
				but all the results can be obviously extended to the general case of in-connectivity.

			First, observe that any  connected dynamic graph with $n$ nodes is $(n-1)$-complete.
			Thus one immediate spinoff of Theorem~\ref{thm:completeness} is the following corollary, 
			which provides a solution to the simultaneous synchronization detection problem in
			strongly connected dynamic graphs when an upper bound $N$ on the network size
			is known by all nodes.

			\begin{cor}\label{cor:N}
				If nodes have an upper bound $N$ of the network size, simultaneous synchronization detection can be 
				achieved in any  continuously strongly connected dynamic graph 
				in $N$ rounds after all nodes have become active using only $O(log(N))$ bits per message.
			\end{cor}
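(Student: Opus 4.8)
The plan is to reduce the corollary to Theorem~\ref{thm:completeness}: under the known bound $N$, a continuously strongly connected dynamic graph can be treated as an $N$-complete one, and then Algorithm~\ref{algo:complete} is run with its threshold parameter instantiated to $N$. The value of $N$ is exactly what lets the nodes pick a usable threshold even though the true size $n$ is unknown.

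First I would invoke the fact recalled just above the corollary statement: a continuously strongly connected dynamic graph on $n$ nodes is $(n-1)$-complete, i.e.\ $G(t:t+n-2)$ is complete for every $t$. This is the standard observation that in a strongly connected graph with self-loops the influence set of any node grows by at least one each round, so it covers all of $V$ within $n-1$ rounds.

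Next I would establish the monotonicity of $T$-completeness under self-loops: if $G(t:t+T'-1)$ is complete and $T \geq T'$, then
$$ G(t:t+T-1) = G(t:t+T'-1) \circ G(t+T':t+T-1) $$
is again complete, because the product of a complete graph with any graph carrying a self-loop at every node is complete (for any $u,v$, take the intermediate node to be $v$ itself). Since every $G(t)$ has self-loops and $N \geq n > n-1$, applying this with $T'=n-1$ and $T=N$ shows the dynamic graph is $N$-complete; note this conclusion uses only the known bound $N$, not the unknown $n$. With $N$-completeness in hand, all nodes run Algorithm~\ref{algo:complete} using the threshold value $N$ in line~9, and Theorem~\ref{thm:completeness} applies verbatim to give simultaneous synchronization detection exactly $N$ rounds after the last node becomes active. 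For the message size, each message carries only the counter $r_u$, whose transmitted value can be capped at $N$ (detection requires only $r_u \geq N$), so messages use $O(\log N)$ bits, matching the size already claimed for Algorithm~\ref{algo:complete}.

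The only point requiring care—rather than a genuine obstacle—is the legitimacy of replacing the true, unknown completeness parameter $n-1$ by the known bound $N$. This is precisely what the self-loop monotonicity step licenses: overestimating the completeness horizon is harmless, so the algorithm can be run with threshold $N$, and the stated bound of $N$ rounds then follows directly from Theorem~\ref{thm:completeness}.
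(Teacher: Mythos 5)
Your proposal is correct and follows essentially the same route as the paper, which presents the corollary as an immediate consequence of the observation that a continuously strongly connected dynamic graph on $n$ nodes is $(n-1)$-complete, combined with Theorem~\ref{thm:completeness} applied with $T=N$. The only addition you make is to spell out the self-loop monotonicity step (that $(n-1)$-completeness and $N \geq n$ imply $N$-completeness), which the paper leaves implicit but which is exactly the justification it relies on.
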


			When $N$ is significantly larger than the network size $n$,  
			this solution to simultaneous synchronization detection may use much more than $n$ rounds.
			However,   Lemma~\ref{lem:cT} enables us to extend the randomized algorithm for approximate
			counting presented in~\cite{KLO10,Osh12} to the case of asynchronous 
			starts, by substituting the local round counters for the round numbers.
			As we shall show below, this yields a randomized algorithm which has only a loose bound $N$ on 
				the network size $n$, 
				sends  short messages, 
				and  with high probability enables all nodes to achieve simultaneous synchronization detection within $O(n)$ rounds.
			For this algorithm, it is assumed that the dynamic graph, and the wakeup times $s_v$, are managed by an {\em oblivious} adversary, which has no access to the outcomes 
				of the coin tosses made by the algorithm.

			The algorithm, denoted ${\cal A}_{N,\eta}$, depends on the two parameters $N$ and $\eta$ 
				where $N$ is an upper bound on the network size and $\eta$ is any real number in $[\, 0,1/2 \, [$.
			It works as follows: upon becoming active, each node $u$ generates $\ell $
				independent random numbers $Y_u^{(1)}, \dots,Y^{(\ell)}_{u}$, where the distribution of each $Y_u^{(i)}$
				is exponential with rate~1.
			At each round, any active node~$u$ first broadcasts the smallest value of the $Y_v^{(i)}$'s
				it has heard of for each index~$i$, and then computes from the  minimum values it received so far an estimation  
				$n_u$ of the cardinality of the set  of nodes it heard of.  
			It detects the synchronization of the round counters when  its round counter $r_u$ is sufficiently larger than $n_u$.

			The analysis of the algorithm ${\cal A}_{N,\eta}$ relies on the following lemma in~\cite{MAS06}
				which is  is an application of the Cram\'er-Chernoff method (see for instance
				\cite{BoucheronLugosiMassart2013}, sections 2.2 and 2.4). 

			\begin{lem}\label{lem:cramer}
			Let $I$ be a finite set of $\ell$-tuples of independent exponential variables all with rate~1:
				$I = \left \{     \left ( Y_1^{(1)} , \dots , Y_1^{(\ell)}   \right) ,   \dots   ,  \left (   Y_{| I |} ^{(1)} , \dots , Y_{| I |} ^{(\ell)}   \right)                      \right \} $,
				and let 
				$W := \frac{\ell}{\sum_{i=1}^{\ell} \min_{1 \leq j \leq | I | } Y_j^{(i)}} $.

			For any $\varepsilon \in [ 0, 1/2] $, we have 
				\begin{equation}
				  {\rm{Pr}} \left ( \left | \, W -  | I  | \, \right |  > 2 \, \varepsilon  |  I  | /3  \right )   \leq 2 \exp \left ( - \varepsilon^2 \ell  /27 \right )  \enspace.\label{equ:W}
				  \end{equation}
			\end{lem}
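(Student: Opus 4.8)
The plan is to reduce the statement to a classical Chernoff bound for a sum of independent rate-$1$ exponential variables, and then to treat the two tails of the reciprocal estimator $W$ separately.

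First I would identify the distribution of the coordinate-wise minima $M_i := \min_{1 \le j \le |I|} Y_j^{(i)}$ appearing in the denominator of $W$. For each fixed $i$ the variables $Y_1^{(i)}, \dots, Y_{|I|}^{(i)}$ are independent rate-$1$ exponentials, so their minimum $M_i$ is exponential with rate $|I|$; moreover the $M_i$ for distinct $i$ involve disjoint blocks of the $Y_j^{(i)}$, hence $M_1, \dots, M_\ell$ are mutually independent. Setting $Z_i := |I| \cdot M_i$, the $Z_i$ become i.i.d.\ rate-$1$ exponentials, and with $\bar Z := \frac{1}{\ell}\sum_{i=1}^{\ell} Z_i$ we obtain the key identity $W = |I| / \bar Z$. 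Dividing by $|I|$, the target event $\bigl\{\, \bigl| W - |I| \bigr| > \tfrac{2}{3}\varepsilon |I| \,\bigr\}$ becomes exactly the deviation event $\bigl\{\, \bigl| \bar Z^{-1} - 1 \bigr| > \tfrac{2}{3}\varepsilon \,\bigr\}$ for $\bar Z$ around its mean $1$.

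Next I would split this into its two one-sided pieces and bound each by the exponential Markov inequality. Writing $u := \tfrac{2}{3}\varepsilon$, the event $W > |I|(1+u)$ is equivalent to $\bar Z < (1+u)^{-1} =: a_+ < 1$, and $W < |I|(1-u)$ is equivalent to $\bar Z > (1-u)^{-1} =: a_- > 1$. Using the moment generating function $\mathbb{E}[e^{\lambda Z_i}] = (1-\lambda)^{-1}$ (valid for $\lambda < 1$, hence covering both signs) and optimizing the Chernoff exponent over $\lambda$, each one-sided tail is bounded by $\exp\bigl(-\ell\, \Lambda^*(a_{\pm})\bigr)$, where $\Lambda^*(a) = a - 1 - \log a$ is the Cram\'er rate function of the rate-$1$ exponential law; a short computation gives the optimal $\lambda^* = 1 - a_-^{-1}$ on the upper side and $\lambda^* = a_+^{-1} - 1$ on the lower side.

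The remaining step, which I expect to be the main obstacle, is the purely analytic verification that $\Lambda^*(a_{\pm}) \ge \varepsilon^2/27$ throughout $\varepsilon \in [0, 1/2]$. Expanding, $\Lambda^*(a_+) = \log(1+u) - \tfrac{u}{1+u}$ and $\Lambda^*(a_-) = \tfrac{u}{1-u} + \log(1-u)$, both of which equal $\tfrac{u^2}{2} + O(u^3) = \tfrac{6}{27}\varepsilon^2 + O(\varepsilon^3)$, comfortably above the target $\varepsilon^2/27$. The real work is turning this asymptotic comparison into a uniform inequality on the whole interval, for instance by bounding the Taylor remainders or by checking directly that $\Lambda^*(a_{\pm})/\varepsilon^2$ stays above $1/27$ for $\varepsilon \in [0,1/2]$; here the restriction $\varepsilon \le 1/2$, i.e.\ $u \le 1/3$, is what keeps $a_+$ away from the pole of the rate function and makes both remainders controllable (at $u = 1/3$ one already finds $\Lambda^*(a_+) \approx 0.038$ against the required $u^2/12 \approx 0.009$, with $\Lambda^*(a_-)$ even larger). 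A union bound over the two tails then supplies the factor $2$ and yields the claimed inequality~(\ref{equ:W}).
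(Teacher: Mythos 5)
Your proof is correct, and it follows exactly the route the paper indicates: the paper gives no proof of this lemma at all, citing it from \cite{MAS06} as an application of the Cram\'er--Chernoff method, which is precisely what you carry out (reduction to i.i.d.\ rate-$1$ exponentials $Z_i=|I|\min_j Y_j^{(i)}$, two one-sided Chernoff bounds with rate function $\Lambda^*(a)=a-1-\ln a$, and a union bound). The analytic step you defer as ``the main obstacle'' is in fact easy: with $u=2\varepsilon/3\le 1/3$, each of the functions $\Lambda^*(a_\pm)-u^2/12$ vanishes at $u=0$ and has derivative $u\left(\frac{1}{(1\pm u)^2}-\frac{1}{6}\right)>0$ on $(0,1/3]$ (since $(1\pm u)^2\le 16/9<6$), so $\Lambda^*(a_\pm)\ge u^2/12=\varepsilon^2/27$ throughout, which gives the stated bound with exactly the claimed constant.
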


			It follows that  for a sufficiently large value of $\ell$, the value of~$n_u$ at the end of round~$t$ provides with high probability a good 
				approximation of  the number of active nodes that~$u$ has heard of so far.
			Then by Lemma~\ref{lem:cT}, it follows that with high probability, if $ n_u <  (1 - \varepsilon) r_u$	
				then all nodes are active and node~$u$ has heard of all.
			As for Algorithm~\ref{algo:connectivity}, we may conclude that with high probability,
				all the nodes  detect synchronization at the same round $t_0$ and make no false detection.

			We fix  the precision parameter of the randomized approximate counting algorithm  in~\cite{KLO10}  to $\varepsilon = 1/3$,
				and  choose $\ell =   \left\lceil   243 . (\ln 4N^2 - \ln \eta) \right\rceil  $ to guarantee a final probability of at least
				$1-\eta$ for successful executions (cf. below).
			Node~$u$ detects synchronization when $n_u < 2 r_u/3$ (see Algorithm~\ref{algo:random}). 

			One key point  of the randomized approximate counting algorithm in~\cite{KLO10} 
				lies in the fact that the algorithm still works when the random variables $Y_u^{(i)}$ are initialized with rounded 
				and range-restricted approximations
				of the initial random numbers of the above scheme.
			More precisely, if we round down each $Y_u^{(i)}$ to the next smaller integer power of $13/12 $, 
				then the probability that $n_v$ is a good approximation of the number of nodes $v$ have heard of (inequality~(\ref{eq:random1})
				below) still holds.
			By the definition of the exponential distribution, it is not hard to see that the  random variables $Y_u^{(i)}$ 
				are all within the range
			$ \left[ \eta/(4\ell N), \ln (4\ell N/\eta) \right]$ with high probability, namely 
				\begin{equation}\label{eq:rangepr}
				{\rm{Pr}} \left[ \forall u\in V, \ \forall i,  \  Y_u^{(i)} \in \left[ \eta/(4\ell N), \ln (4\ell N/\eta) \right] \right] > 1- \eta/2 \enspace, 
				\end{equation}
			 The two above transformations of rounding and range-restricting can then be combined and yield a collection of random variables
				denoted $\oY_u^{(i)}$.
			The correctness proof of the original algorithm with the exponential random variables  $Y_u^{(i)}$ is still valid
				when substituting $\oY_u^{(i)}$ for $Y_u^{(i)}$ except an additional probability of at most $\eta/2$ of
				unsuccessful executions in which the range~(\ref{eq:rangepr}) is violated.
			In addition, the number of distinct variables $\oY_u^{(i)}$ in that range   is $O(\log(N\eta^{-1})$, 
				hence each such variable can be represented using $O \left( \log \log (N / \eta)   \right)$ bits.	

			The reader is referred to~\cite{Osh12} for more details on the above scheme developed for the model 
				in which all nodes start at round 1 and hold the true round numbers. 
			Below, we  only present the  points in the correctness proof of  ${\cal A}_{N,\eta}$ that are specific
				to  the round counters $r_u$.

			\begin{algorithm}[h]
				\small
				\begin{algorithmic}[1]
					\REQUIRE{}
					\STATE $r_u \in \IN $, initially $0$ 
					\STATE $synch_u \in \{true, false\}$, initially $false$ 
					\STATE $\oY_u = \left(\oY_u^{(1)}, \dots,\oY^{(\ell)}_{u} \right) \in \IR^{\ell}$, initially
					$\ell =   \left\lceil   243 . (\ln 4N^2 - \ln \eta) \right\rceil $ rounded and range-restricted approximations of independent random numbers with 
					exponential distribution of rate 1.
					\STATE $n_u \in \IN $, initially $0$ 
					\ENSURE{}
					\STATE send $\langle r_u,\oY_u \rangle $  to all processes and receive one message from each in-neighbor
					\IF{at least one received message is null}
						\STATE  $r_u \gets 0$
					\ELSE
						\STATE $r_u \gets 1 + \min_{r \in {M_u^*}^{(1)} } (r) $
					\ENDIF
					\FOR{$i=1,\dots,\ell$}
					\STATE $\oY_u^{(i)} \gets  \min_{\oY^{(i)} \in {M^*_u}^{(i +1)}} \left( \oY^{(i)} \right) $
					\ENDFOR
					\STATE $n_u \gets \ell / \sum_{i=1}^{\ell} \oY_u^{(i)} $
					\IF{$ n_u  <  2\, r_u / 3$}
						\STATE $synch_u \gets true $
						\ENDIF
				\end{algorithmic}
				\caption{ The randomized algorithm ${\cal A}_{N,\eta}$  }
				\label{algo:random}
			\end{algorithm}

			\begin{thm}\label{thm:random}
			In any  execution of  Algorithm~${\cal A}_{N,\eta}$ on a dynamic graph that is continuously strongly connected with at most $N$ nodes,  with probability at least $1-\eta$
				 all nodes  
				detect the synchronization 
				of the $r_u$ counters simultaneously  in less than $2n$ rounds after 
				all nodes have become active. The algorithm uses
			 messages of size 
				$O \left ( \log (N /\eta). \log\log (N /\eta)  \right)$.
			\end{thm}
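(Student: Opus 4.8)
The plan is to run the same structural argument as in the proof of Theorem~\ref{thm:cT}, replacing the exact cardinality $|HO_u|$ by its randomized estimate $n_u$ and the deterministic detection rule by $n_u < \tfrac23 r_u$, while controlling the estimation error through Lemma~\ref{lem:cramer}. Throughout I would condition on a \emph{good event} $\mathcal{G}$ on which (i) no node's random variables fall outside the range of~(\ref{eq:rangepr}), and (ii) every estimate produced by the algorithm deviates from the true count by at most $\tfrac29$ of that count. Event (i) fails with probability at most $\eta/2$ by~(\ref{eq:rangepr}), and I would bound the failure of (ii) by $\eta/2$ as well, so that $\Pr[\mathcal{G}] \ge 1-\eta$; the theorem is then proved deterministically on $\mathcal{G}$.

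For the union bound in (ii), the key observation is that $n_u(t)$ is a function of the \emph{set} $HO_u(t)$ alone, since $\oY_u^{(i)}(t) = \min_{v \in HO_u(t)} \oY_v^{(i)}$. Because the dynamic graph and the wakeup times are fixed by an oblivious adversary, the collection of sets that ever occur as some $HO_u(t)$ is determined independently of the coin tosses; as each $HO_u$ is an inclusion-monotone chain of at most $n$ distinct subsets of $V$ and there are $n$ nodes, at most $n^2 \le N^2$ distinct sets arise. Applying Lemma~\ref{lem:cramer} with $\varepsilon = 1/3$ to each such set $S$ (with $I = \{(\oY_v^{(1)},\dots,\oY_v^{(\ell)}) : v \in S\}$ and $W$ the value computed from $S$), and using the asserted validity of the analysis under the rounded, range-restricted variables, each estimate is off by more than $\tfrac29|S|$ with probability at most $2\exp(-\ell/243) \le \eta/(2N^2)$ by the choice of $\ell$; a union bound over the $\le N^2$ sets gives total failure $\le \eta/2$. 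Thus on $\mathcal{G}$ we have $\tfrac79|HO_u(t)| \le n_u(t) \le \tfrac{11}{9}|HO_u(t)|$ for every node $u$ and round $t$.

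With this in hand the soundness step is immediate: if $u$ detects at round $t$, i.e.\ $n_u(t) < \tfrac23 r_u(t)$, then $\tfrac79|HO_u(t)| \le n_u(t) < \tfrac23 r_u(t)$ gives $|HO_u(t)| < \tfrac67 r_u(t) < r_u(t)+1$, and since Lemma~\ref{lem:cT} (with $c=T=1$) yields $|HO_u(t)| \ge \min(r_u(t)+1, n)$, the minimum must equal $n$; hence $HO_u(t) = V$ and $r_u(t) > \tfrac76 n > n$. From here I would copy the simultaneity argument of Theorem~\ref{thm:cT}: letting $t_0$ be the first round at which some node $u$ detects, $HO_u(t_0+1)=V$ means $u$ has heard from a latest-woken node, so by~(\ref{eq:HO}) and Lemma~\ref{lem:smax}, $r_u(t_0+1) = t_0+1-s_{\max} = \min_{w} r_w(t_0+1) > n$. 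Consequently every $r_w(t_0+1) > n$, and Lemma~\ref{lem:cT} forces $HO_w(t_0+1) = V$ for all $w$.

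The one genuinely new point, and the step I expect to be the crux, is turning ``all $HO_w = V$'' into \emph{simultaneous} detection, since the additive slack of the estimate does not let me deduce $n_w < \tfrac23 r_w$ from $r_w \ge r_u$ alone. The resolution is that once $HO_w(t_0+1)=V$ for every $w$, each node computes its estimate from the \emph{identical} multiset $\{(\oY_z^{(i)})_i : z \in V\}$, so $n_w(t_0+1) = n_u(t_0+1)$ for all $w$; combined with $\tfrac23 r_w(t_0+1) \ge \tfrac23 r_u(t_0+1) > n_u(t_0+1)$ this gives detection at every node at round $t_0$, and since $t_0$ was the first detection, synchronization is detected simultaneously. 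Finally, for the timing and message bounds I would note that continuous strong connectivity is $(n-1)$-completeness, so $HO_u(t)=V$ and $r_u(t) = t - s_{\max}$ hold by round roughly $s_{\max}+n$; then $n_u \le \tfrac{11}{9}n$ forces the detection condition as soon as $r_u > \tfrac{11}{6}n$, i.e.\ within fewer than $2n$ rounds of $s_{\max}$. The message carries $r_u$ ($O(\log N)$ bits) together with $\ell = O(\log(N/\eta))$ coordinates, each taking one of $O(\log(N/\eta))$ distinct values and hence $O(\log\log(N/\eta))$ bits, for a total of $O(\log(N/\eta)\log\log(N/\eta))$ bits.
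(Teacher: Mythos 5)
Your proof is correct and follows essentially the same route as the paper's: the same combination of Lemma~\ref{lem:cramer} (with $\varepsilon=1/3$), Lemma~\ref{lem:cT} and Lemma~\ref{lem:smax}, the same appeal to the oblivious adversary to make the union bound over at most $N^2$ heard-of sets legitimate, the same identical-estimates argument for simultaneity, and the same $\tilde{n}\leq 11n/9$ computation giving detection within $2n$ rounds of $s_{\max}$, plus the $\eta/2$ charge for the rounded, range-restricted variables. The only difference is organizational: you condition on a single good event up front and argue deterministically on it, whereas the paper separately bounds the probability of a false detection before the synchronization round (union bound over the at most $n-1$ true updates of each $n_v$) and the probability of late detection after round $2n+s_{\max}$.
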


			\begin{proof}
			We fix any real number $\eta \in ] \, 0, 1/2 \, ]$.
			For any node~$v$ and at each round $t\geq s_v$, let $ \nu_v(t) $ denote the number of nodes that
				$v$ has heard of at the end of round~$t$.\footnote{Formally, 	$v$ has heard of $w$ at the end of round~$t$ if for some $s\leq t$,    there is a  path $(w=v_s,\ldots,v_{t}=v)$ where $(v_i,v_{i+1})\in G(i)$ and   both $v_i$ and $v_{i+1}$ are active in round $i,i=s,\ldots, t-1$.}\\ 
			For a continuously strongly connected graph, Lemma~\ref{lem:cT} then reads:
				\begin{equation} 
				 | \nu_v(t) | \, \geq  \min \left(  r_v(t +1) +1  , n  \right ) \enspace . \label{eq:nu}
				\end{equation}

			\noindent  Let $t_0$ be the  first round at which some round counter in the network is set to $n$ (the true network size).
			Using~(\ref{eq:nu}), we deduce that  all nodes are active at round $t_0$ and that 
				every node has heard of all at the end of round $t_0$,
				i.e., $t_0 \geq s_{\max}$ and $\nu_v(t_0) =n$.	
			Then the same argument as in the proof of Theorem~\ref{thm:cT} shows that all nodes have synchronized at the end of round $t_0$,
				namely
				\begin{equation}\label{eq:t0+1}
				 \forall v\in V   , \     \forall t\geq t_0+1, \   \ r_v(t ) = t - s_{\max}   \enspace.
				\end{equation}
			In particular, we have $t_0 = n -1 + s_{\max} $.	

			\noindent Since $\nu_v(t_0) =n$,  every node $v$ computes the {\em same} estimate $\tilde{n}$ of $n$
				at the end of round~$t_0$  in its variable $n_v$ and keeps this value for $n_v$ at all later rounds:
				\begin{equation}\label{eq:nuatt0+1}
				 \forall v\in V   , \    \forall t\geq t_0+1, \  \   \nu_v(t ) =    \tilde{n}  \enspace.
				\end{equation}
			It follows that  if the condition at line~15 holds at some node in round~$t \geq t_0$, 
				then  it holds at  all nodes in every round $t' \geq t$.

			\noindent By Lemma~\ref{lem:cramer}, for every node~$v\in V $ and every round $t$, $t_0 \geq t \geq s_v$, we have 
				\begin{equation}\label {eq:random1}
				{\rm{Pr}} \big[ \, \left | \, n_v(t +1)  -  \nu_v(t) \, \right |  > 2 \,  \nu_v(t)  /9  \,   \big] \leq 2 \exp \left ( - \ell  /243 \right ) \enspace .
				\end{equation}
			\noindent Using  (\ref{eq:nu}) and the inequality $r_v(t_0) \leq n-1$, we get that at each round $t$, $t_u \geq t \geq s_v$,
				the following implication holds:
				\begin{equation*}
					 \left | \, n_v(t+1)  - \nu_v(t) \, \right |   \leq  2 \, \nu_v(t)  /9  \, \Rightarrow \,  n_v(t +1) \geq 2 \, r_v(t+1)/ 3   \enspace .
				\end{equation*}

			\noindent Now observe that each node $v$ makes a true update to $n_v$ at line 14 of the algorithm at most $n-1$ times 
				(when the set of nodes it heard of strictly increases). 
			This implies, by the union bound, that the probability that node~$v$ does not detect synchronization  by round $t_0$ is at least
				$ 1 - 2( n -1) \exp \left ( - \ell  /243 \right ) $.
			Using  the union bound again and the upper bound  $N \geq n$, we obtain that the probability that no node detects synchronization  
				by round $t_0 $  when using the random variables $Y_u^{(i)}$ in the algorithm
				is thus at least $ 1 - 2 N (N-1) \exp \left ( - \ell  /243 \right )$.\\	

			\noindent Let $\theta$ denote the first round at which a node detects synchronization, i.e., the condition at line~15 holds 
				for the first time at round~$\theta $.	
			The above argument shows that the probability that $\theta \geq t_0$
				is at least $ 1 - 2 N (N-1) \exp \left ( - \ell  /243 \right ) $.\\	

			\noindent  Inequality (\ref{eq:random1}) at round~$t=t_0$ shows that with probability at least $1- 2 \exp \left ( - \ell  /243 \right ) $,
				it holds that 
				\begin{equation}\label {eq:random2}
				 \left | \, \tilde{n}  -  n  \, \right |   \leq  2 \,  n /9 
				  \end{equation}
				which implies that $ \tilde{n} \leq  11n/9$.
			Moreover equations~(\ref{eq:t0+1}) and (\ref{eq:nuatt0+1}) for $t = 2n + s_{\max} \geq t_0 +1 $ write:
				$$ \forall v\in V, \ \ r_v(2n + s_{\max} ) = 2n \ \mbox{ and } \  \nu_v(2n + s_{\max} ) =  \tilde{n} \enspace . $$
			It follows that the condition  at line~15 holds at round $ 2n + s_{\max} $, or equivalently $\theta \leq 2n + s_{\max} $, 
				with probability at least $1- 2 \exp \left ( - \ell  /243 \right ) $.\\

			\noindent  In conclusion,  with probability at least $1- 2 N^2 \exp \left ( - \ell  /243 \right )  \geq 1 - \eta/2 $,   it holds that 
				$$  n + s_{\max}   \leq  \theta \leq 2n + s_{\max}  \enspace . $$
			As explained above,  using  the approximated variables $\oY_u^{(i)}$ instead of $Y_u^{(i)}$ 
				results in an additional probability of at most $\eta/2$ of unsuccessful executions.
			This shows that  with probability at least $1- \eta$, all the nodes correctly and simultaneously detect synchronization 
				of their round counters by round $2n + s_{\max}$.

			\end{proof}

			The terminating variant of ${\cal A}_{N,\eta}$ 
				in which nodes stop executing their code after they have detected 
				synchronization (line~16) thus achieves simultaneous 
				synchronization detection with high probability:
				 running time 
				is in  $O(n)$ and messages are of size $O \left ( \log (N /\eta). \log\log (N /\eta)  \right)$.

			\subsection{Synchronization detection with unbounded broadcast time}\label{sec:esc}

			Theorem~\ref{thm:synchesc} demonstrates that nodes can eventually synchronize in any dynamic graph 
				that is eventually strongly connected. 
			In this section we show that this synchronization can be detected only if nodes know the exact network size $n$.
				%

			We first show that synchronization cannot be detected in such dynamic graphs
				even if it is given that the network size is either $n$ or $n+1$, for some fixed integer $n$ which is known to the algorithm. 

			\begin{thm}\label{thm:impossibility1}
			Synchronization of the $r_u$ counters cannot be detected  under the sole assumption of  eventual  connectivity.  
			This result holds even if it is given that  the network size is either $n$ or $n+1$, for some (fixed) $n$.
			\end{thm}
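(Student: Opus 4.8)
The plan is to argue by contradiction with an indistinguishability (``shielding'') argument, exploiting the fact that eventual strong connectivity lets an adversary keep one node hidden for an arbitrarily long \emph{finite} time before global connectivity is restored. Suppose, toward a contradiction, that some algorithm $A$ achieves synchronization detection under the sole assumption of eventual strong connectivity, even under the promise that the network size is $n$ or $n+1$.

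First I would fix a reference execution $E_1$ over the $n$ nodes $V=\{v_1,\dots,v_n\}$, all woken at round~$1$ (so $s_{v_i}=1$), with the complete graph in every round; this dynamic graph is trivially eventually strongly connected, and its size $n$ respects the promise. Hence $A$ must achieve synchronization detection on $E_1$, so by the eventual-detection requirement some node fires. Let $t_d$ be the first round at which some node $v^{*}\in V$ sets $synch_{v^{*}}=true$; this round is finite. I then build a second execution $E_2$ over $V'=V\cup\{w\}$ of size $n+1$: the nodes of $V$ are again woken at round~$1$, whereas $w$ is woken late, at $s_w=t_d+1$; for rounds $1,\dots,t_d$ the graph is the complete graph on $V$ together with a self-loop at $w$ and \emph{no} edge $(w,v_i)$ into any $v_i\in V$, and from round $t_d+1$ onward the graph is the complete graph on all $n+1$ nodes. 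One checks easily that $E_2$ is eventually strongly connected: for any start round $t$ and any target $t'\ge t_d+1$, every node of $V$ can remain inside $V$ through round $t_d$ and then reach $w$ at round $t_d+1$, while $w$ self-loops through round $t_d$ and then reaches all of $V$, so $G(t:t')$ is strongly connected.

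The crux is a routine induction establishing that the shielding makes the joint state of $V$ identical in $E_1$ and $E_2$ through round $t_d$: since no node of $V$ ever has $w$ among its in-neighbors $\In_{v_i}(s)$ for $s\le t_d$, each $v_i$ receives in $E_2$ exactly the same multiset of messages as in $E_1$ at every such round (all coming from $V$-nodes whose states agree by the induction hypothesis), and therefore performs the same state update. In particular $v^{*}$ again fires at round $t_d$ in $E_2$. But in $E_2$ the node $w$ is still passive throughout rounds $1,\dots,t_d$, so $t_{synch}\ge \max_{u\in V'} s_u = s_w = t_d+1 > t_d$; thus synchronization has not yet occurred by round $t_d$, whereas $synch_{v^{*}}(t_d)=true$. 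This contradicts the no-false-detection requirement ($synch_u(t)=true\Rightarrow t\ge t_{synch}$) for the execution $E_2$, which is a legitimate eventually-connected instance consistent with the size promise. Hence no such $A$ exists.

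The main obstacle, and the step deserving the most care, is the shielding construction: because inactive nodes still emit detectable heartbeats in this model, I must arrange that $w$ has \emph{zero} incoming edges into $V$ during the entire window $[1,t_d]$, so that its very existence is invisible to $V$, and yet reconnect the graph afterwards so that eventual strong connectivity is preserved. This is precisely what separates eventual connectivity from the bounded-broadcast-time models of the earlier sections, and it is where the impossibility fundamentally comes from; the accompanying indistinguishability induction is then straightforward.
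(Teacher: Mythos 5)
Your proof is correct and takes essentially the same route as the paper's: a shielding/indistinguishability argument in which an $(n+1)$-st node is kept invisible to the $n$ active nodes (no edges from it into them, only a self-loop) until after detection has occurred, and the graph is then completed so that eventual strong connectivity holds, contradicting the no-false-detection condition since the extra node wakes only after the firing round. The only cosmetic differences are that you derive the contradiction from the first firing of a single node rather than waiting for all nodes to detect, and you spell out the indistinguishability induction and the connectivity check that the paper's sketch leaves implicit.
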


			\begin{proof}[Sketch of proof]
			For the sake of contradiction, suppose that there is an algorithm $A$ which detects synchronization
				in every execution on an eventually connected dynamic graph with   $n$ or $n+1$ nodes.

			Let $V$ be a set of cardinality $n+1$, let $u$ be a node in $V$, and let $W=V\setminus\{u\}$. 
			Then by our assumption, $A$ achieves synchronization detection in the execution $\cal{E}$ 
				over the dynamic graph $G$ in which each $G(t)$ is $K_W$, the complete directed graph over $W$, 
				and all nodes in $W$  are active from the first round.	
			Hence, there is some $t_0$ such that every node in $W$ has detected synchronization by round $t_0$.

			Now consider an execution $\cal{E}'$ over a dynamic graph $H$ in which 
				$H(t)=K^{u}_W$ for $1\leq t\leq t_0$ where $K^{u}_W$ denotes the directed graph over $V$ with the same edges as $K_W$, 
				and $H(t)=K_V$, the complete directed graph over $V$, for $t>t_0$. 
			Assume further that in $\cal{E}'$, $s_v=1$ for each $v\in W$, and $s_u=t_0+1$. 
			Then since nodes in $W$ cannot distinguish between $\cal{E}$ and $\cal{E'}$ during the first $t_0$ rounds, 
				they all incorrectly detect synchronization by round $t_0$ in $\cal{E'}$, i.e., before $u$ became active. 
			The proof is completed by noting that $H$ is eventually strongly connected over $V$.
			\end{proof}

			Interestingly, the latter impossibility result does not hold anymore when the exact size of the network is known.
			Indeed, thanks to its knowledge of $n$, each node~$u$ can detect  that all nodes are active.
			By Lemma~\ref{lem:smax},  its round counter~$r_u$ is then minimal amongst all the local round counters, 
				in which case $u$ is considered as {\em ready to synchronize}.
			Then~$u$ can determine when nodes in the network are all ready, that is to say  all round counters are minimal, 
				and thus are equal.

			For that, each node~$u$ maintains two sets of node identifiers, namely~$HO_u$ which is the set of active nodes
				$u$ has heard of so far, and $OK_u$ which is the set of nodes  that  $u$ knows to be ready to synchronize.
			The corresponding pseudo-code is given in Algorithm~\ref{algo:detectesc}.

			\begin{algorithm}[h]
				\small
				\begin{algorithmic}[1]
					\REQUIRE{}
					\STATE $r_u \in \IN $, initially $0$ 
					\STATE $synch_u  \in \{true, false\}$, initially $false$ 
					\STATE $HO_u \subseteq V $, initially $\{ p \}$ 
					\STATE $OK_u \subseteq V $, initially $\emptyset$ 
					\ENSURE{}
					\STATE send $\langle r_u,HO_u ,OK_u \rangle $  to all nodes and receive one message from each in-neighbor
					\IF{at least one received message is null}
						\STATE  $r_u \gets 0$
					\ELSE
						\STATE $r_u \gets 1 + \min_{r\in {M_u^*}^{(1)}} (r) $
					\ENDIF
					\STATE $HO_u \gets \cup_{HO \in {M_u^*}^{(2)}} HO $
					\STATE $OK_u \gets \cup_{OK \in {M_u^*}^{(3)}} OK  $
					\IF{  $| HO_u | \,  = n $}
						\STATE $OK_u \gets OK_u  \cup \{ u \}$
						\ENDIF
					\IF{  $| OK_u | \,  = n $}
						\STATE $synch_u \gets true$
						\ENDIF
				\end{algorithmic}
				\caption{ \ Synchronization detection with  eventual  strong connectivity}
				\label{algo:detectesc}
			\end{algorithm}

			We now show that synchronization cannot be detected simultaneously by all nodes of
				an eventually strongly connected network, demonstrating that with respect to synchronization,
				simultaneity is harder than detection in this network model.

			\begin{thm}\label{thm:impossibility2}
			Simultaneous synchronization detection is impossible in  eventually strongly connected dynamic graphs, 
				even if all nodes know the size of the network.
			\end{thm}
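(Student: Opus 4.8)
The plan is to argue by contradiction using an indistinguishability (``coordinated attack'') argument whose essential resource is the \emph{unbounded} broadcast time: since the dynamic graph is only required to be eventually---not continuously---strongly connected, an adversary may keep any chosen part of the network mutually silent for an arbitrarily long but finite interval before finally reconnecting it. Suppose some algorithm $A$ (possibly using the known value of $n$) achieves simultaneous synchronization detection on every eventually strongly connected dynamic graph. The value of $synch_u$ at the end of a round is a deterministic function of node~$u$'s local history alone (its wake-up round together with the sequence of received multisets $M_u^*$), so I would track the round at which nodes fire and show it cannot be transported consistently across locally indistinguishable executions.

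First I would fix a reference execution $\mathcal{E}_0$ in which all $n$ nodes are active from round~$1$ and each $G(t)$ is complete; by correctness all nodes fire simultaneously at some round $t^\ast\ge 1$, and none earlier. Next I would record the transfer principle: if $\mathcal{E}$ and $\mathcal{E}'$ are executions in which every node wakes by round $t^\ast$, if some node $v$ has exactly the same local history through round $t^\ast$ in both, and if all nodes fire at $t^\ast$ in $\mathcal{E}$, then $v$ fires at $t^\ast$ in $\mathcal{E}'$ as well (identical history), whence, by simultaneity~(condition~(5)) applied \emph{inside} $\mathcal{E}'$, all nodes fire at $t^\ast$ in $\mathcal{E}'$. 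Thus ``all nodes fire at $t^\ast$'' is an invariant along any chain of eventually strongly connected executions whose consecutive members are indistinguishable to at least one node and in which $\max_v s_v\le t^\ast$.

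The core of the proof is then to build a finite chain $\mathcal{E}_0,\mathcal{E}_1,\dots,\mathcal{E}_m$ of eventually strongly connected executions, each consecutive pair indistinguishable to some node through round $t^\ast$, that terminates in an execution $\mathcal{E}_m$ in which firing at $t^\ast$ is manifestly \emph{incorrect}. Concretely, one node $z$ is made to wake (or to be reset by a straggler's null messages) so late that $r_z(t^\ast)$ is strictly smaller than the others, so the round counters are not all equal at $t^\ast$ and hence $t^\ast<t_{synch}$; by the invariant some node nonetheless fires at $t^\ast$, contradicting detection requirement~(3) (equivalently, if $z$ wakes only at round $t^\ast+1$, then condition~(5) evaluated at $t=t^\ast+1\ge\max\{s_u,s_z\}$ fails, since $synch_u(t^\ast+1)$ is $true$ while $synch_z(t^\ast+1)$ is $false$). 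Each link of the chain performs a single local modification---delaying by one round the first moment at which news about $z$ reaches the node that is currently one hop behind---and the unbounded-delay freedom lets me stretch the topology (a long directed line followed by an arbitrarily long silent stall) so that a one-round change at the far end stays invisible near the other end, exactly as messages are removed one at a time in the coordinated-attack lower bound.

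I expect this chain construction to be the main obstacle, precisely because every node knows the exact size $n$. A single pair of executions cannot suffice: to fool a deciding node into firing, its history must look like one in which synchronization genuinely holds, which for an $n$-aware node means it must already have heard all $n$ identifiers---and that alone appears to preclude hiding a straggler $z$ from it in one step. The argument must therefore route the discrepancy through many executions, each of which need only be indistinguishable from its neighbour to a \emph{single} node relative to \emph{that} node's history, never exposing the stale or missing status of $z$ in any one step. Making the silent intervals long enough that the counters can drift apart while no node locally detects any change, and simultaneously certifying that every execution in the chain remains eventually strongly connected (reconnecting everything after the final stall), is the delicate bookkeeping the proof must carry out.
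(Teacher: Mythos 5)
Your high-level skeleton is exactly the paper's: a reference execution in which all nodes fire at some round $t^\ast$; a chain of eventually strongly connected executions, each indistinguishable from the previous one to at least one node through round $t^\ast$, along which simultaneity (condition (5)) propagates the invariant that all nodes fire at $t^\ast$; and a final execution in which a straggler wakes after $t^\ast$, so that firing at $t^\ast$ violates condition (3) (since $t_{synch}\geq s_{\max}$). That part is sound. The gap is that the chain itself --- which you defer as ``delicate bookkeeping'' and ``the main obstacle'' --- is the entire mathematical content of the theorem, and what you do sketch of it does not work as stated. Starting from an execution in which every $G(t)$ is complete, no single local modification at a round $t<t^\ast$ preserves \emph{any} node's history: the modification changes some node's state at round $t+1$, and in a complete graph that node broadcasts to all others at round $t+1\leq t^\ast$, so every node's view through $t^\ast$ is potentially altered and your transfer principle has no node to which it can be applied. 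The chain must therefore thin the topology backwards from round $t^\ast$, choosing graphs sparse enough that at each backward step some node's causal past misses the change; moreover the terminal execution must leave the straggler $z$ with no outgoing edges to other nodes during the whole interval $[1,t^\ast]$, since any such edge carries a null or non-null message and thus betrays whether $z$ is awake. Your ``long directed line plus silent stall'' picture does not explain how to reach such an execution one indistinguishable step at a time.

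The paper's proof supplies precisely this missing construction, and it is short: take $G$ to alternate the out-star $S$ centered at a fixed node $u$ with its transpose $S^T$, all nodes awake at round 1; all nodes fire at some round $t_F$. In $S^T$ every $v\neq u$ receives only its own message, so replacing $G(t_F)=S^T$ by the self-loop-only graph $I$ is invisible to all $v\neq u$ through round $t_F$; they still fire at $t_F$, and simultaneity on the modified (still eventually strongly connected) graph forces $u$ to fire at $t_F$ as well. Symmetrically, in $S$ node $u$ receives only its own message, so replacing $S$ by $I$ one round earlier is invisible to $u$. Alternating these two observations $t_F$ times yields an execution whose first $t_F$ rounds are all $I$ and in which all nodes still fire at $t_F$; now delaying $s_u$ past $t_F$ is invisible to every other node, so they fire before $u$ wakes --- the contradiction. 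One smaller slip in your final step: you cannot stipulate that $r_z(t^\ast)$ is strictly smaller than the other counters, because the counters are set by the algorithm, not by the adversary; the clean contradiction is the one you give second (some node fires before $s_{\max}\leq t_{synch}$, violating condition (3)).
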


			\begin{proof}[Sketch of proof]

			By contradiction, suppose that there is an algorithm ${\cal A}$  that achieves simultaneous synchronization
				detection in any  eventually strongly connected dynamic graph. 

			Let $S$ denote the star directed graph centered at $u$,  
				and let  $S^ T $ be its transpose.
			Let $I= (V, E_I)$ the directed graph with only a self-loop at each node, i.e., $E_I = \{ (v,v) : v\in V \}$.

			We consider the execution of $A$ with  start signals all received in the first round, and the alternating 
				sequence of directed graphs $G = S, S^T\! , S, S^T\!, \dots $. 
			The dynamic graph $G$ is eventually strongly connected, and thus all nodes detect synchronization
				 at the same round~$t_F$.

			Now assume that  $G(t_F) =  S^T$ (the case $G(t_F) =  S$ is similar).
			From the viewpoint of any node $v\neq u$, $G$ is indistinguishable up to round~$t_F$ from the dynamic graph $G^{1 }$
				that is similar to $G$ except at round $t_F$ where $G^{ 1 }(t_F) =  I $.
			Hence all  nodes other than $u$  also detect synchronization at round$t_F$ with the dynamic graph $G^{ 1 }$.
			The same holds  for node$u$ since  the dynamic graph~$G^1$ is eventually strongly connected.

			By repeating this argument $t_F$ times, we show that all  nodes detect synchronization at round~$t_F$ in the execution
				of ${\cal A}$ with  start signals all received in the first round, and the dynamic graph 
				$G^{t_F }= I, \dots, I, S,S^T,S, S^T, \dots $. 

			From the viewpoint of any node $v\neq u$, the latter execution is indistinguishable up to round~$t_F$ from the
				execution with the same dynamic graph $G^{ t_F }$ and start signals all received 
				in the first round except the one received by $u$ at some round $s_u >  t_{F}$.
			In this execution, synchronization is detected earlier than $u$'s start, a contradiction.

			\end{proof}

			\section{Conclusion}

			In this paper, we defined  a model of distributed algorithms in which nodes do not start the algorithm simultaneously.  
			We studied algorithms in this model which synchronize the round counters of the nodes in a dynamic network, 
				where the network topology may change  each round, there is no information on the network size, 
				and node identities are not mutually known.
			As opposed to many models of dynamic networks developed for counting or consensus, 
				links are not supposed to be bidirectional, and we assume no stability of the network in time.

			We presented several  algorithms whose messages size and  time complexity  highly depend 
				on the connectivity of the topology.

			We also showed that with only eventual connectivity assumptions, synchronization detection is impossible
				unless nodes know the exact size of the network.

			Possible extensions of this work involve variations of the model of computation.
			For instance, it is interesting to know in which other models of connectivity, synchronization can be detected. 
			 It is also of interest to determine whether simultaneous synchronization detection is possible in an anonymous
				dynamic network where nodes have limited  storage capabilities and communicate through finite bandwith channels 
				as in~\cite{HOT11}.
			Our adaptation of the randomized algorithm of \cite{KLO10} provides an efficient  Monte Carlo solution for this problem, in the case of continuously strongly 
				connected networks.

			This raises another question concerning the role of leaders in a dynamic network: does the existence of a
				leader in an anonymous network may help for synchronization detection?
			Combined with our strategy for synchronization detection, the Metropolis method (see~\cite{NOOT09})
				yields  a deterministic  algorithm that achieves simultaneous synchronization detection in $O(n^3)$ rounds and that
				works in any anonymous dynamic network  with a leader and  a bidirectional connected topology.
			Unfortunately this algorithm uses messages of infinite size (nodes send real numbers) and do not tolerate 
				rounding.
			The existence of a deterministic algorithm for synchronization detection in polynomial time, with anonymous nodes and 
				bounded bandwith capacity is still an open problem.

			Also of interest are dynamic networks which enable the solution of the consensus problem with asynchronous start. 
			This problem could shed light on the relation between consensus algorithms and kernel agreement algorithms of \cite{CBS09}.

			\bibliographystyle{plain}

\end{document}